\documentclass[aps,prla,reprint,floatfix,superscriptaddress]{revtex4-1}
\usepackage[utf8]{inputenc}
\usepackage[T1]{fontenc}

\usepackage[english]{babel}
\usepackage{letltxmacro}
\LetLtxMacro{\ORIGselectlanguage}{\selectlanguage}
\makeatletter
\DeclareRobustCommand{\selectlanguage}[1]{%
  \@ifundefined{alias@\string#1}
    {\ORIGselectlanguage{#1}}
    {\begingroup\edef\x{\endgroup
       \noexpand\ORIGselectlanguage{\@nameuse{alias@#1}}}\x}%
}
\newcommand{\definelanguagealias}[2]{%
  \@namedef{alias@#1}{#2}%
}
\makeatother
\definelanguagealias{en}{english}


\usepackage{graphicx}
\usepackage[normalem]{ulem}
\usepackage{amsmath}
\usepackage{textcomp}
\usepackage{amssymb}
\usepackage{capt-of}
\usepackage{hyperref}
\usepackage{parskip}
\usepackage{mathrsfs}
\usepackage[braket, qm]{qcircuit}
\usepackage{times}
\usepackage{bbm}
\usepackage{etoolbox}
\makeatother
\usepackage{tikz}
\usepackage{fleqn}
\usepackage[capitalise]{cleveref}
\usepackage{tikz-cd}
\usepackage{amsthm}

\newtheorem*{theorem*}{Theorem}

\newtheorem{theorem}{Theorem}
\usepackage{commath}

\hypersetup{colorlinks=true,linkcolor=blue,citecolor=red,filecolor=red,urlcolor=blue,runcolor=blue}


\newtheorem{definition}{Definition}

\usepackage[normalem]{ulem}

\begin{document}
\title{Quantifying non-Markovianity: a quantum resource-theoretic approach}

\author{Namit Anand}
\email{namitana@usc.edu}
\affiliation{Department of Physics and Astronomy, University of Southern California, Los Angeles, CA 90089}
\affiliation{Center for Quantum Information Science \& Technology, University of Southern California, Los Angeles, CA 90089, USA}

\author{Todd A. Brun}
\email{tbrun@usc.edu}
\affiliation{Department of Physics and Astronomy, University of Southern California, Los Angeles, CA 90089}
\affiliation{Center for Quantum Information Science \& Technology, University of Southern California, Los Angeles, CA 90089, USA}
\affiliation{Department of Electrical Engineering, University of Southern California, Los Angeles, CA 90089}

\date{\today}

\begin{abstract}
The quantification and characterization of non-Markovian dynamics in quantum systems is essential both for the theory of open quantum systems and for a deeper understanding of the effects of non-Markovian noise on quantum technologies. Here, we introduce the robustness of non-Markovianity, an operationally-motivated, \emph{optimization-free} measure that quantifies the minimum amount of Markovian noise that can be mixed with a non-Markovian evolution before it becomes Markovian. We show that this quantity is a bonafide non-Markovianity measure, since it is faithful, convex, and monotonic under composition with Markovian maps. A two-fold operational interpretation of this measure is provided, with the robustness measure quantifying an advantage in both a state discrimination and a channel discrimination task. Moreover, we connect the robustness measure to single-shot information theory by using it to upper bound the min-accessible information of a non-Markovian map. Furthermore, we provide a closed-form analytical expression for this measure and show that, quite remarkably, the robustness measure is exactly equal to half the Rivas-Huelga-Plenio (RHP) measure [Phys. Rev. Lett. \textbf{105}, 050403 (2010)]. As a result, we provide a direct operational meaning to the RHP measure while endowing the robustness measure with the physical characterizations of the RHP measure.
\end{abstract}
\maketitle

\textit{Introduction}.---The idealization of a quantum system coupled to a \emph{memoryless} environment is exactly that: an idealization. In the theory of open quantum systems, memoryless or Markovian evolution arises from the assumptions of weak (or singular) coupling to a fast bath~\cite{breuerTheoryOpenQuantum2002,rivas_open_2012}. Although extremely useful, this assumption does not always apply to physical systems of interest---including, but not limited to, several quantum information processing technologies---and so a complete physical picture cannot neglect non-Markovian effects. Quantum non-Markovianity has several distinct definitions and characterizations (see Refs.~\cite{rivasQuantumNonMarkovianityCharacterization2014,breuerColloquiumNonMarkovianDynamics2016} for a detailed comparison), prominent examples include the semigroup formulation~\cite{alicki_quantum_2007}, the distinguishability measure~\cite{breuerMeasureDegreeNonMarkovian2009}, and CP-divisibility~\cite{rivas_entanglement_2010}. In this paper, we use the CP-divisibility approach to non-Markovianity.

More formally, the time evolution of a quantum dynamical system is called Markovian (or divisible~\cite{wolf_dividing_2008}) if there exists a family of trace-preserving linear maps, $\{\Lambda_{(t_2,t_1)}, t_2 \geq t_1 \geq t_0\}$ which satisfy the composition law, $\Lambda _ { \left( t _ { 3 } , t _ { 1 } \right) } = \Lambda _ { \left( t _ { 3 } , t _ { 2 } \right) } \circ \Lambda _ { \left( t _ { 2 } , t _ { 1 } \right) } , \quad t _ { 3 } \geqslant t _ { 2 } \geqslant t _ { 1 }$, where $\Lambda_{(t_2,t_1)}$ is a CP map for every $t_2$ and $t_1$. Quantum systems undergoing Markovian dynamics are described by a master equation, as captured by the following fundamental result.\\

\begin{definition}[\textbf{Gorini-Kossakowski-Sudarshan-Lindblad}~\cite{kossakowski_quantum_1972,lindblad_generators_1976-1,gorini_completely_1976}] An operator \(\mathcal{L}_{t}\) is the generator of a quantum Markov (or divisible) process if and only if it can be written in the form 
\begin{align}
  \label{eq:gkls-theorem}
& \frac { \mathrm { d } \rho ( t ) } { \mathrm { d } t } = \mathcal { L } _ { t } [ \rho ( t ) ] = - \mathrm { i } [ H ( t ) , \rho ( t ) ]  \nonumber ~+ \\ & \sum _ { k } \gamma _ { k } ( t ) \left[ V _ { k } ( t ) \rho ( t ) V _ { k } ^ { \dagger } ( t ) - \frac { 1 } { 2 } \left\{ V _ { k } ^ { \dagger } ( t ) V _ { k } ( t ) , \rho ( t ) \right\} \right] ,
\end{align}
where \(H(t)\) and \(V_{k}(t)\) are time-dependent operators, with \(H(t)\) self-adjoint, and \(\gamma_{k}(t) \geq 0\) for every \(k\) and \(t\). 
\end{definition}

Recently, there has been some effort to characterize, both qualitatively and quantitatively, the \textit{resourcefulness} of non-Markovianity in quantum information processing tasks~\cite{laine_nonlocal_2015}, quantum thermodynamics~\cite{thomas_thermodynamics_2018}, quantum error-suppression techniques like dynamical decoupling~\cite{addis_dynamical_2015}, and the degree of entanglement that can be generated between a system and the environment~\cite{Mirkin2019Entanglement}. However, quantum information theory already has a powerful formalism to characterize the quantification and manipulation of a quantum resource, the so-called \textit{quantum resource theories} framework~\cite{chitambarQuantumResourceTheories2018}. In this paper, we analyze quantum non-Markovianity through the lens of quantum resource theories.

The construction of operational measures for quantum resources is one of the many goals of quantum resource theories. In this direction, there has been a lot of exciting work recently, especially with regard to the so-called \textit{robustness} measures, which characterize how robust a resource is with respect to ``mixing.'' Robustness measures with operational significance have been constructed for quantum entanglement~\cite{plenioIntroductionEntanglementMeasures2005}, coherence~\cite{abergQuantifyingSuperposition2006,baumgratzQuantifyingCoherence2014,streltsovColloquiumQuantumCoherence2017}, asymmetry~\cite{napoli_robustness_2016}, and other resources.  Notably, it was recently shown that for any quantum resource that forms a convex resource theory, there exists a subchannel discrimination game where that resource provides an advantage that is quantified by a generalized robustness measure~\cite{takagi_operational_2018}. These results were further generalized to arbitrary convex resource theories---both in quantum mechanics and general probabilistic theories---characterizing the resource content not only of quantum states but also of quantum measurements and channels~\cite{takagi_general_2019-1} (see also related work by Refs.~\cite{skrzypczykRobustnessMeasurementDiscrimination2018,ConicProgram2019}).

Motivated by the operational nature of robustness measures in quantum resource theories, we construct a robustness measure for non-Markovianity, which quantifies the minimum amount of Markovian ``noise'' that needs to be mixed (in the sense of convex combination) with a non-Markovian process to make it Markovian. A few remarks are in order before we go into more detail, especially about the nature of this resource. Quantum non-Markovianity is a \textit{dynamic} resource, as opposed to \textit{static} resources like coherence, entanglement, magic, etc. Static resources are a property of quantum \textit{states} while dynamic resources are a property of a quantum \textit{process}. There are two general ways to quantify the resourcefulness of quantum operations. One can either quantify the so-called \textit{resource generating power} of these operations by relating them to an underlying resource theory of quantum states. Or, one can define an arbitrary set of quantum operations as free and quantify the resourcefulness with respect to this set. In our resource-theoretic construction for non-Markovianity, we take the latter approach, and, as a result, there are no \textit{free states}, per se; in contrast to the typical approach in static quantum resource theories~\footnote{In Refs.~\cite{bhattacharyaResourceTheoryNonMarkovianity2018a,bhattacharya_convex_2018}, the authors construct a resource theory for non-Markovianity where they identify the set of free states as the Choi-Jama\l kowski matrices corresponding to Markovian maps. Although this construction is in some ways easier to work with, since there are both designated free states and free operations, this quickly leads into difficulties. Any resourceful operation can transform a free state (a Choi matrix for a Markovian map, which is normalized and positive semidefinite) into a non-state (since it can now have negative eigenvalues) and so it can take us outside the set of quantum states, which is usually assumed to be the landscape for quantum resource theories. To circumvent such obstacles, we avoid such an identification and only define a set of free operations.}. 

A third construction to quantify the resourcefulness of quantum operations would be to define the set of \textit{free superoperations} (or free supermaps~\cite{Chiribella2008supermaps}), which are transformations that leave the set of free operations invariant, and may induce a preorder~\cite{chitambarQuantumResourceTheories2018} over this set. Such a construction would be a proper resource theory since there are both free objects (quantum maps) and free transformations (quantum supermaps). We briefly discuss this possibility for the resource theory of non-Markovianity and show that our choice of free superoperations induces a preorder over the set of all quantum operations (Markovian and non-Markovian).



\textit{Preliminaries}.---Let $\mathcal{H}$ be a finite-dimensional Hilbert space. Then, consider the (normalized) maximally entangled state between two copies of the Hilbert space, \(| \Phi^{+} \rangle = \frac{1}{\sqrt{d}} \sum_{i=0}^{d-1} | i \rangle | i \rangle\), where \(d\) is the dimension and $\{ |i\rangle\}_{i=0}^{d-1}$ an orthonormal basis for the Hilbert space. To a linear map \(\Lambda: \mathcal{H} \rightarrow \mathcal{H}\), we associate a Choi-Jamio\l kowski matrix (called a Choi matrix for brevity), \( \rho_{\Lambda} \equiv \left[\mathbb{I} \otimes \Lambda \right] \left( | \Phi^+ \rangle \langle  \Phi^+ |  \right)\). Then, $\rho_{\Lambda}$ is positive-semidefinite if and only if the map $\Lambda$ is completely positive~\cite{choi_completely_1975,jamiolkowski_linear_1972}. And, $\norm{\rho_{\Lambda}}_1 = 1$ if $\Lambda$ is trace-preserving. See the Supplemental Material for more details about the Choi-Jamio\l kowski isomorphism.


\textit{The short-time limit of Lindbladian dynamics}.---In this paper we only consider dynamics with Lindblad-type generators (see \cref{eq:gkls-theorem}), primarily for finite-dimensional systems, although several results generalize to the infinite-dimensional case. In Ref.~\cite{bhattacharyaResourceTheoryNonMarkovianity2018a}, it was shown that the set of all Markovian Choi matrices, i.e., the Choi matrices corresponding to Markovian maps, form a convex and compact set in the short-time limit. \\

\begin{theorem}[\cite{bhattacharyaResourceTheoryNonMarkovianity2018a}] The set of all Markovian Choi matrices, \(\mathcal { F }^ { \epsilon, t } = \left\{ \rho_{\Lambda} ( t + \epsilon , t ) \mid \left\| \rho_{\Lambda} ( t + \epsilon , t ) \right\| _ { 1 } = 1 , \forall t , \epsilon >0 \right\}\) is a convex and compact set in the limit \(\epsilon \rightarrow 0^{+}\).
\end{theorem}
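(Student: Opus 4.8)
\emph{Proof strategy.}---The plan is to push everything down to the level of the Lindblad generators via the short-time expansion and then use structural properties of the set of GKLS generators. First I would write, for any Markovian family, \(\Lambda_{(t+\epsilon,t)} = \mathrm{id} + \epsilon\,\mathcal{L}_t + O(\epsilon^2)\), where \(\mathrm{id}\) is the identity channel and \(\mathcal{L}_t\) is the instantaneous generator, which by the GKLS theorem has the Lindblad form with all \(\gamma_k(t)\ge 0\). Composing with \(\mathbb{I}\otimes(\cdot)\) and evaluating on \(|\Phi^+\rangle\langle\Phi^+|\) gives
\begin{equation}
\rho_{\Lambda}(t+\epsilon,t) = |\Phi^+\rangle\langle\Phi^+| + \epsilon\,\chi_{\mathcal{L}_t} + O(\epsilon^2),
\end{equation}
with \(\chi_{\mathcal{L}_t} := \big(\mathbb{I}\otimes\mathcal{L}_t\big)\big(|\Phi^+\rangle\langle\Phi^+|\big)\). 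Since the map \(\mathcal{L}_t \mapsto \chi_{\mathcal{L}_t}\) is linear, to leading order in \(\epsilon\) the set \(\mathcal{F}^{\epsilon,t}\) is an affine image (scale by \(\epsilon\), translate by \(|\Phi^+\rangle\langle\Phi^+|\)) of the set of admissible generator Choi matrices, and the freedom in \(t\) merely amounts to the freedom of choosing the instantaneous generator, so one may work with a single generic GKLS generator \(\mathcal{L}\).

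For \textbf{convexity}, I would show that the GKLS generators form a convex set (in fact a convex cone): the Hamiltonian part \(-\mathrm{i}[H,\cdot]\) is linear in the self-adjoint \(H\), and a convex combination \(p\,\mathcal{L}^{(1)} + (1-p)\,\mathcal{L}^{(2)}\) of two dissipators is again a dissipator, obtained by concatenating the two lists of jump operators with rescaled---still non-negative---rates. Hence if \(\rho_1,\rho_2 \in \mathcal{F}^{\epsilon,t}\) arise from \(\mathcal{L}^{(1)},\mathcal{L}^{(2)}\), then \(p\rho_1 + (1-p)\rho_2 = |\Phi^+\rangle\langle\Phi^+| + \epsilon\,\chi_{p\mathcal{L}^{(1)}+(1-p)\mathcal{L}^{(2)}} + O(\epsilon^2)\), which is (up to \(O(\epsilon^2)\)) the lag-\(\epsilon\) Choi matrix of the Markovian family generated by \(p\mathcal{L}^{(1)}+(1-p)\mathcal{L}^{(2)}\); so the convex combination lies in \(\mathcal{F}^{\epsilon,t}\) to the order that survives the limit \(\epsilon \to 0^+\).

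For \textbf{compactness} I would combine boundedness with closedness. Boundedness is immediate, since \(\mathcal{F}^{\epsilon,t}\) is contained in the set of density operators on \(\mathcal{H}\otimes\mathcal{H}\), which has unit trace norm. For closedness, the key input is the ``conditional complete positivity'' form of the GKLS theorem: a Hermiticity- and trace-preserving \(\mathcal{L}\) is a valid generator if and only if \(Q\,\chi_{\mathcal{L}}\,Q \succeq 0\) with \(Q = \mathbb{I} - |\Phi^+\rangle\langle\Phi^+|\). Both \(Q(\cdot)Q \succeq 0\) and the linear trace-/Hermiticity-preservation constraints define closed sets, so the admissible \(\chi_{\mathcal{L}}\) form a closed convex cone; intersecting this with the closed, bounded requirement that \(|\Phi^+\rangle\langle\Phi^+| + \epsilon\,\chi_{\mathcal{L}}\) be a genuine density operator keeps the set closed, hence compact.

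The main obstacle, and the point where ``short-time limit'' must be given a precise meaning, is the \(O(\epsilon^2)\) remainder: at any fixed \(\epsilon>0\) exponentiation is nonlinear, so \(\Lambda_{(t+\epsilon,t)}\) is not literally affine in \(\mathcal{L}_t\) and \(\mathcal{F}^{\epsilon,t}\) is only \emph{approximately} convex. The rigorous claim one must establish is that the rescaled sets \(\big(\mathcal{F}^{\epsilon,t} - |\Phi^+\rangle\langle\Phi^+|\big)/\epsilon\) converge, as \(\epsilon\to 0^+\), to the exactly convex and closed set of admissible generator Choi matrices, and that it is this limiting object that plays the role of the free set for the robustness measure. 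A related subtlety is that closedness of the Markovian set is not automatic---CP-divisibility of \(\Lambda_{(t+\epsilon,t)}\) is nominally an entire family of positivity constraints over intermediate times---and it is precisely the short-time structure that collapses it to the single semidefinite condition \(Q\,\chi_{\mathcal{L}_t}\,Q \succeq 0\), with no further inequalities surviving in the limit.
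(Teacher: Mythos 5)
Your proposal is correct, and its core convexity argument is essentially the paper's own: the paper likewise Taylor-expands $\Lambda^{(i)}(t+\epsilon,t)=\mathbb{I}+\epsilon\,\mathcal{L}_t^{(i)}$ to first order, notes that a convex combination yields $\mathbb{I}+\epsilon\,[p\mathcal{L}_t^{(1)}+(1-p)\mathcal{L}_t^{(2)}]$ with the combined generator again of Lindblad form with nonnegative rates, and then transfers the statement to Choi matrices by linearity of the Choi--Jamio\l kowski map --- you simply spell out the step the paper states in one line (merging the two jump-operator lists with rescaled rates). Where you genuinely diverge is compactness: the paper dispatches it in a sentence, asserting closedness and boundedness ``using the continuity of the 1-norm'' and norm equivalence in finite dimensions, whereas you obtain boundedness from containment in the density matrices and closedness from the conditional-complete-positivity characterization $Q\,\chi_{\mathcal{L}}\,Q \succeq 0$, $Q=\mathbb{I}-|\Phi^+\rangle\langle\Phi^+|$, which describes the admissible generators by closed semidefinite and linear constraints; this buys an explicit, checkable description of the limiting free set rather than an appeal to norm continuity (one small fix: the linear constraint on a generator is trace \emph{annihilation}, $\mathrm{Tr}\circ\mathcal{L}=0$, not trace preservation --- still a closed condition, so nothing breaks). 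Your closing paragraph flagging the $O(\epsilon^2)$ remainder and the need to interpret $\lim_{\epsilon\to 0^+}\mathcal{F}^{\epsilon,t}$ as the (exactly convex, closed) first-order object is also well taken: the paper simply neglects those terms, so your treatment is, if anything, more careful on the one point where the published argument is loosest.
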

\begin{proof}
See the Supplemental Material for a proof of this theorem.
\end{proof}

The need for this construction emerges from the non-convex nature of the set of Markovian maps~\cite{wolf_dividing_2008}. In light of the above theorem, we define the set of all Markovian maps (at some time $t$) in the limit $\epsilon \rightarrow 0^+$ as the \textit{free operations}. Using the Choi-Jamio\l kowski isomorphism, this makes $\mathcal{F}^{\epsilon,t}$ the set of \textit{free operations} when we are working in the Choi matrix representation. Intuitively, any physically-motivated distance measure (that is say, contractive under CPTP maps) from this set would suffice as a quantifier of non-Markovianity.

\textit{Robustness of non-Markovianity}.---We now introduce the robustness of non-Markovianity measure. Define $\mathcal{O}_{\mathcal{F}}(t, \epsilon) \equiv \mathcal{O}_{\mathcal{F}}$ as the set of free maps, which (courtesy of the theorem above) is closed and convex. We drop the $t, \epsilon$ below for brevity. We define the \textit{robustness of non-Markovianity} (RoNM) as
\begin{align}
  \label{eq:gen-robustness-defn}
\mathcal{N}_{\mathcal{R}} \left( \Lambda \right) = \min_{\Phi \in \mathcal{O}_{\mathcal{F}}} \left\{ s \geq 0 \mid \frac{\Lambda + s \Phi}{1+s} =: \Gamma \in \mathcal{O}_{\mathcal{F}} \right\}.
\end{align}We emphasize that since the ``mixing'' is with respect to \(\mathcal{O}_{\mathcal{F}}\), i.e., \(\Phi \in \mathcal{O}_{\mathcal{F}}\), this is the \emph{generalized} robustness measure, as introduced in Ref.~\cite{takagi_general_2019-1}. One can also define a more general form of the robustness measure by mixing with respect to the set of all maps, Markovian and non-Markovian, as introduced in Ref.~\cite{bhattacharyaResourceTheoryNonMarkovianity2018a}. However, we choose the former definition (generalized robustness), for this choice will play a crucial role for the operational interpretation of our measure; in fact, most of our results do not generalize, in any straightforward way, to the more general robustness measure~\footnote{The name \textit{generalized} robustness is slightly unfortunate for this context. Generalized robustness is defined as mixing with respect to the set of all CPTP maps, which is usually also the most general set with which mixing is possible. However, when considering non-Markovian maps, which lie outside the set of CPTP maps, the robustness measure defined via the most general form of mixing (using both Markovian and non-Markovian maps) is not termed the \textit{generalized} robustness.}.

An equivalent form for the RoNM can be obtained using the Choi-Jamio\l kowski representation. Given the Choi matrix \(\rho_{\Lambda}\) corresponding to a map \(\Lambda\), its RoNM is obtained as
\begin{align}
  \label{eq:gen-robustness-defn-choi}
\mathcal{N}_{\mathcal{R}}^{\text{Choi}}(\rho_{\Lambda}) = \min_{\tau \in \mathcal{F}} \left\{s \geq 0 \mid \frac{\rho_{\Lambda} + s \tau}{1+s} =: \delta \in \mathcal{F} \right\},
\end{align}
where \(\mathcal{F} \equiv \mathcal{F}^{t, \epsilon}\) is the set of all Markovian Choi matrices.  (Once again, we drop the superscripts for brevity.) For a fixed map $\Lambda$, $\mathcal{N}_{\mathcal{R}}(\Lambda) = \mathcal{N}^{\text{Choi}}_{\mathcal{R}}(\rho_\Lambda) $.  Therefore, we don't make the distinction unless it is necessary, and define $\mathcal{N}^{\text{Choi}}_{\mathcal{R}}(\cdot) = \mathcal{N}_{\mathcal{R}}(\cdot)$.

From the definition of the RoNM, one can define an \emph{optimal decomposition} or \emph{optimal pseudo-mixture} of a map as
\begin{align}
  \label{eq:optimal-decomposition-channel}
\Lambda = \left( 1+s^{\star} \right) \Gamma^{\star} - s^{\star} \Phi^{\star},
\end{align}
where \(s^{\star} = \mathcal{N}_{\mathcal{R}}(\Lambda)\) and $\Gamma^{\star}, \Phi^{\star} \in \mathcal{O}_{\mathcal{F}}$. Similarly, in the Choi matrix representation, we have an optimal decomposition:
\begin{align}
  \label{eq:optimal-decomposition-state}
\rho = \left( 1+s^{\star} \right) \delta^{\star} - s^{\star} \tau^{\star},
\end{align}
where \(s^{\star} = \mathcal{N}_{\mathcal{R}}(\rho)\) and $\delta^{\star}, \tau^{\star} \in \mathcal{F}$.

\textit{Properties}.---We now list the properties that make RoNM a bonafide measure, namely faithfulness, convexity, and monotonicity. (i) The RoNM is \textit{faithful}, meaning that it vanishes if and only if the evolution is Markovian. That is, \(\mathcal{N}_{\mathcal{R}}(\Lambda) = 0 \iff \Lambda \in \mathcal{O}_{\mathcal{F}}\).

(ii) It is \textit{convex}, meaning that one cannot increase the amount of non-Markovianity by classically mixing two non-Markovian maps, i.e., for $0\leq p \leq 1$,
\begin{align}
\mathcal{N}_{\mathcal{R}} \left( p \Lambda_{1} + \left( 1-p \right) \Lambda_{2} \right) \leq p \mathcal{N}_{\mathcal{R}}(\Lambda_{1}) + \left( 1-p \right) \mathcal{N}_{\mathcal{R}}(\Lambda_{2}).
\end{align}

(iii) It is \textit{monotonic} under composition with the free operations (Markovian maps). That is,
\begin{align}
\mathcal{N}_{\mathcal{R}}( \Gamma \circ \Lambda) \leq \mathcal{N}_{\mathcal{R}}(\Lambda),
\end{align}
where \(\Gamma\) is any Markovian map.

The proofs of these three properties are given in the Supplemental Material.

\textit{Remark}.---The monotonicity of RoNM under (left) composition with a Markovian map is no coincidence. In a resource theory, the free transformations induce a preorder on the set of free objects and any measure should be compatible with the structure; which naturally makes said measure monotonic under these transformations. For quantum non-Markovianity, we can identify (left) composition with a Markovian map as a free superoperation. That is, define \(\mathbf{S}_{\mathcal{F}}\) as the set of all superoperations defined via (left) composition with Markovian maps. Then, it is easy to see that \(\mathbf{S}_{\mathcal{F}}\) forms a semigroup since it contains the identity superoperation, i.e., \(\mathbf{1} \in \mathbf{S}_{\mathcal{F}}\), and it is closed under composition, i.e., if \(\mathbf{A}, \mathbf{B} \in  \mathbf{S}_{\mathcal{F}} \implies \mathbf{A} \circ \mathbf{B} \in  \mathbf{S}_{\mathcal{F}} \). Since the set of free superoperations form a semigroup, this induces a preorder over the set of all operations (Markovian and non-Markovian) and any measure in this resource theory must be compatible with this preorder~\cite{chitambarQuantumResourceTheories2018} (the RoNM clearly is, as listed above).

\textit{Semidefinite program for the RoNM}.---A semidefinite program (SDP)~\cite{boydConvexOptimization2004} is a triple \(\left( \Phi, A, B \right)\), where, \(\Phi \) is a hermiticity-preserving map from the Hilbert space \(\mathcal{X}\) to \(\mathcal{Y}\) and \(A ,B \) are Hermitian matrices over the Hilbert spaces \(\mathcal{X},\mathcal{Y}\), respectively~\cite{watrousTheoryQuantumInformation2018}. Then, associated to the SDP, we can define a pair of optimization problems 
\begin{equation}
\sup~\{ \mathrm{Tr}(AX) : \Phi(X) \leq B, X \geq 0 \}, \text{ and}
\end{equation}
\begin{equation}
\inf~\{ \mathrm{Tr}(BY) : \Phi^{*}(Y) \geq A, Y \geq 0 \},
\end{equation}
called the primal and the dual problem, respectively. 

We now show that the robustness measure can be cast as a SDP. Given a Choi matrix \(\rho\), a decomposition of the form $\rho = \left( 1+s \right) \delta - s \tau$ is equivalent to \(\rho \leq \left( 1+s \right) \delta\), where \(\delta \in \mathcal{F}\), since there exists a (Markovian) Choi matrix $\tau$, such that, \(\rho - \left( 1+s \right) \delta = s \tau\) if \(\rho \leq \left( 1+s \right) \delta\). Then, the RoNM can be characterized as,
\begin{align}
  \label{eq:robustness-channel-equivalent-form}
\mathcal{N}_{\mathcal{R}}(\rho) = \min_{\delta \in \mathcal{F}} \left\{s \geq 0 \mid \rho \leq \left( 1+s \right) \delta \right\}.  
\end{align}

In light of this, we can define the semidefinite programming form of \(\mathcal{N}_{\mathcal{R}}(\rho)\) as
\begin{align}
\label{eq:sdp-main}
\inf~\{ \lambda - 1 \mid \rho \leq \delta, \delta \geq 0, \text{ and } \text{Tr}\left[ \delta \right] = \lambda \}.
\end{align}

It is easy to check that strong duality holds, and the dual formulation is
\begin{align}
\label{eq:sdp-dual}
\sup~\{ \text{Tr}\left[ \rho X \right] - 1 \mid X \geq 0 \text{ and } \text{Tr}\left[ \delta X \right] \leq 1 ~\forall \delta \in \mathcal{F} \}.
\end{align}

Effectively, this means that computing the RoNM can be performed \textit{efficiently}; however, as we'll show later, the RoNM has a closed-form analytical expression, which makes it an \textit{optimization-free} measure.

\textit{Operational significance}.---By suitably adapting the construction of Ref.~\cite{takagi_general_2019-1}, we provide an operational interpretation to the RoNM via a state discrimination task and a channel discrimination task. Although the original construction was meant to characterize the resourcefulness of quantum channels (trace preserving and completely positive (CP) maps), this construction works for non-Markovian maps as well (which are not CP), as we will show below.

\textit{State discrimination}.---Suppose we are given an ensemble of quantum states, \(\mathcal{E}=\left\{p_{j}, \sigma_{j} \right\}_j\) and a map \(\mathbb{I} \otimes \Lambda\). We are to distinguish which state $\sigma_j$ has been selected from the ensemble by a single application of the map followed by a measurement with positive operator-valued measure (POVM) elements \(\mathbb{M} = \left\{M_{j} \right\}_j, \text{ s.t. } M_j \geq 0, \sum_j M_j = \mathbb{I}\). The average success probability for this task is \(p _ { \mathrm { succ } } \left( \mathcal{E} , \mathbb{M} , \mathbb { I } \otimes \Lambda \right) = \sum_j p _ { j } \operatorname { Tr } \left[ \mathbb { I } \otimes \Lambda \left( \sigma _ { j } \right) M _ { j } \right]\), which clearly depends on the ensemble, the choice of the POVMs, and the map. We show below that a maximization over all ensembles and all POVMs gives an operational characterization to the RoNM.\\
\begin{theorem}
\label{thm:operational-meaning}
For any map \(\Lambda\), Markovian or non-Markovian, we have,
\begin{align*}
\max_{ \mathcal{E}, \mathbb{M}} \frac{p _ { \mathrm { succ } } \left( \mathcal{E}, \mathbb{M}, \mathrm { I } \otimes \Lambda \right) }{\max\limits_ { \Gamma \in \mathcal{O} _ { \mathcal { F } } } p _ { \mathrm { succ } } \left( \mathcal{E}, \mathbb{M} , \mathbb { I } \otimes \Gamma \right)} = 1 + \mathcal{N}_{\mathcal{R}}(\Lambda ).
\end{align*}
\end{theorem}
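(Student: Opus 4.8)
The plan is to prove the two inequalities separately, using the SDP dual formulation~\eqref{eq:sdp-dual} for the ``$\geq$'' direction and a direct construction for the ``$\leq$'' direction. For the lower bound, I would start from the optimal dual variable $X^\star \geq 0$ achieving $\mathrm{Tr}[\rho_\Lambda X^\star] - 1 = \mathcal{N}_{\mathcal{R}}(\Lambda)$ subject to $\mathrm{Tr}[\tau X^\star] \leq 1$ for all $\tau \in \mathcal{F}$. The idea is to ``spectrally decompose'' $X^\star$ into a POVM and an ensemble: write $X^\star = \sum_j M_j$-like data by diagonalizing and grouping, and use the Choi-Jamio\l kowski isomorphism to translate $\mathrm{Tr}[\rho_\Lambda X^\star]$ into a success probability $\sum_j p_j \mathrm{Tr}[(\mathbb{I}\otimes\Lambda)(\sigma_j) M_j]$ for a suitable ensemble $\mathcal{E}^\star$ and POVM $\mathbb{M}^\star$ built from the eigendata of $X^\star$. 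The normalization condition $\mathrm{Tr}[\tau X^\star]\leq 1$ for all Markovian $\tau$ becomes exactly the statement that $\max_{\Gamma\in\mathcal{O}_{\mathcal F}} p_{\mathrm{succ}}(\mathcal{E}^\star,\mathbb{M}^\star,\mathbb{I}\otimes\Gamma) \leq $ (the same normalizing constant), so that the ratio for this particular choice is at least $1 + \mathcal{N}_{\mathcal{R}}(\Lambda)$, hence the max over all $\mathcal{E},\mathbb{M}$ is too.

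For the upper bound, I would argue that for \emph{every} ensemble $\mathcal{E}$ and POVM $\mathbb{M}$,
\begin{align*}
p_{\mathrm{succ}}(\mathcal{E},\mathbb{M},\mathbb{I}\otimes\Lambda) \leq (1+\mathcal{N}_{\mathcal{R}}(\Lambda))\, p_{\mathrm{succ}}(\mathcal{E},\mathbb{M},\mathbb{I}\otimes\Gamma^\star),
\end{align*}
where $\Lambda = (1+s^\star)\Gamma^\star - s^\star\Phi^\star$ is the optimal pseudo-mixture from~\eqref{eq:optimal-decomposition-channel}. Plugging this decomposition into the linear functional $p_{\mathrm{succ}}$ gives $p_{\mathrm{succ}}(\mathcal{E},\mathbb{M},\mathbb{I}\otimes\Lambda) = (1+s^\star)p_{\mathrm{succ}}(\mathcal{E},\mathbb{M},\mathbb{I}\otimes\Gamma^\star) - s^\star p_{\mathrm{succ}}(\mathcal{E},\mathbb{M},\mathbb{I}\otimes\Phi^\star)$, and since $\Phi^\star\in\mathcal{O}_{\mathcal F}$ and success probabilities are non-negative, dropping the second term yields the bound. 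Dividing by $\max_{\Gamma\in\mathcal{O}_{\mathcal F}} p_{\mathrm{succ}}(\mathcal{E},\mathbb{M},\mathbb{I}\otimes\Gamma) \geq p_{\mathrm{succ}}(\mathcal{E},\mathbb{M},\mathbb{I}\otimes\Gamma^\star)$ shows the ratio is at most $1+\mathcal{N}_{\mathcal{R}}(\Lambda)$ for every $\mathcal{E},\mathbb{M}$, hence the max is too.

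Combining the two directions gives equality. The main obstacle I anticipate is the lower bound, specifically making the dictionary between dual SDP variables and (ensemble, POVM) pairs fully rigorous when $\Lambda$ is not completely positive: one must check that the construction from $X^\star$ genuinely yields a valid POVM ($M_j\geq 0$, $\sum_j M_j = \mathbb{I}$) and a valid normalized ensemble, and that the operator inequality $\mathrm{Tr}[\tau X^\star]\leq 1$ translates \emph{exactly} into the denominator bound rather than merely an inequality in the wrong direction. A subtlety worth flagging is that $\mathbb{I}\otimes\Lambda$ applied to states $\sigma_j$ need not produce positive operators, but the success probability is still a well-defined real linear functional, so the argument of the upper bound goes through unchanged; for the lower bound one must verify that the optimal $X^\star$ can be chosen so that the associated quantities are physically meaningful, which follows by padding/rescaling as in the construction of Ref.~\cite{takagi_general_2019-1}. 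I would also remark that strong duality (already asserted below~\eqref{eq:sdp-main}) is what guarantees the dual optimum is attained and equals $\mathcal{N}_{\mathcal{R}}(\Lambda)$, closing the loop.
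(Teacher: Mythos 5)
Your upper-bound argument is exactly the paper's: plug the optimal pseudo-mixture $\Lambda=(1+s^\star)\Gamma^\star-s^\star\Phi^\star$ into the linear functional $p_{\mathrm{succ}}$, drop the non-negative $\Phi^\star$ term (non-negative precisely because $\Phi^\star\in\mathcal{O}_{\mathcal{F}}$ is CP), and bound $p_{\mathrm{succ}}(\mathcal{E},\mathbb{M},\mathbb{I}\otimes\Gamma^\star)$ by the maximum over $\mathcal{O}_{\mathcal{F}}$. That half is complete and correct.

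The gap is in your lower bound, and it is exactly the step you flag as the ``main obstacle.'' The idea of invoking the dual SDP \cref{eq:sdp-dual} and its optimal witness $X^\star$ is right, but the dictionary you sketch --- spectrally decomposing $X^\star$ into ``$M_j$-like data'' and building \emph{both} the ensemble and the POVM from its eigendata --- is not the construction that works, and as stated it does not produce the quantity $\operatorname{Tr}[\rho_\Lambda X^\star]$ as a success probability: the ensemble states are \emph{inputs} to $\mathbb{I}\otimes\Lambda$, whereas $X^\star$ lives on the output space, and the Choi matrix $\rho_\Lambda$ appears in $p_{\mathrm{succ}}$ only if the input is the maximally entangled state. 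The ensemble should not be built from $X^\star$ at all. The construction the paper uses is much simpler: take the singleton ensemble $p_0=1$, $\sigma_0=|\Phi^+\rangle\langle\Phi^+|$ (and a dummy state with $p_1=0$), and the two-outcome POVM $\{X^\star/\|X^\star\|_\infty,\ \mathbb{I}-X^\star/\|X^\star\|_\infty\}$, which is valid since $X^\star\geq 0$. Then the numerator is $\operatorname{Tr}[\rho_\Lambda X^\star]/\|X^\star\|_\infty=(1+\mathcal{N}_{\mathcal{R}}(\Lambda))/\|X^\star\|_\infty$ by strong duality, the denominator is $\max_{\Gamma\in\mathcal{O}_{\mathcal{F}}}\operatorname{Tr}[\rho_\Gamma X^\star]/\|X^\star\|_\infty\leq 1/\|X^\star\|_\infty$ by dual feasibility (the constraint ranges exactly over $\mathcal{F}$), the normalization $\|X^\star\|_\infty$ cancels in the ratio, and the maximization over $(\mathcal{E},\mathbb{M})$ can only increase the left-hand side. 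With this substitution for your eigendata construction the proof closes; without it, the lower bound remains an unproven assertion rather than a proof.
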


\begin{proof}
The proof is in the Supplemental Material.
\end{proof}

We see how this result gives an operational interpretation for the RoNM.  For states evolving under a Markovian map, evolution always reduces the ability to distinguish the states; in general, the probability to distinguish the states decays exponentially.  For non-Markovian evolution, by contrast, it is possible at times for two states to become \textit{more} distinguishable.  This increase is proportional to the robustness measure.

\textit{Channel discrimination}.---We now give another operational characterization of the robustness measure in the context of channel discrimination. Suppose we have access to an ensemble of maps, i.e., a quantum operation sampled from the prior distribution, $\mathcal{E}_\Lambda =\left\{ p _ { j } , \Lambda _ { j } \right\} _ { j = 0 } ^ { N - 1 }$. We apply one map chosen at random from this ensemble to one subsystem of a bipartite state $\Psi \in \mathcal{D} \left( \mathcal{H} \otimes \mathcal{H} \right)$, and perform a collective measurement on the output system by measurement operators $\mathbb{M} = \left\{ M _ { j } \right\} _ { j = 0 } ^ { N }$. We also allow measurements with inconclusive outcomes; for example, when using POVMs to distinguish non-orthogonal states. Then, the average success probability for this task is \(\widetilde{p} _ { \mathrm { succ } } \left( \mathcal{E}_\Lambda, \mathbb{M}, \Psi \right)  = \sum _ { j = 0 } ^ { N - 1 } p _ { j } \operatorname { Tr } \left[ \left\{ \mathbb{I} \otimes \Lambda _ { j } ( \Psi ) \right\} M _ { j } \right]\), where inconclusive measurement outcomes do not contribute to the success probability. For the ensemble of maps, we also define $\widetilde{\mathcal{N}}_{\mathcal{R}} \left( \mathcal{E}_\Lambda \right) \equiv \max_{j} \mathcal{N}_{\mathcal{R}} \left( \Lambda_{j} \right)$. Then, we can connect the maximal advantage in this channel discrimination task to the maximum robustness of the ensemble of maps (see also Ref.~\cite{Bae2016Divisibility} for an operational characterization of divisibility of maps using channel distinguishability).\\
\begin{theorem}
For an ensemble of maps, \(\mathcal{E}_\Lambda\) as defined above, we have,
\label{thm:operational-interpretation-channel-disc}
\begin{align*}
\max\limits_{\Psi , \mathbb{M}} \frac{\widetilde{p} _ { \mathrm{ succ } } \left( \mathcal{E}_\Lambda , \mathbb{M}, \Psi \right) }{ \max\limits_ { \Gamma \in O _ { \mathcal { F } } } \widetilde{p} _ { \mathrm{ succ } } \left( \mathcal{E}_\Gamma , \mathbb{M} , \Psi \right) } =  1 + \widetilde {\mathcal{N} } _ { \mathcal{R} } \left( \mathcal{E}_\Lambda \right),
\end{align*}
where $\Psi \in \mathcal{D} \left( \mathcal{H} \otimes \mathcal{H} \right)$ is the set of all bipartite density matrices.
\end{theorem}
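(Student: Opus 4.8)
\emph{Strategy.} The plan is to adapt the channel-discrimination argument of Ref.~\cite{takagi_general_2019-1} to the present (non-CP) setting, proving the claimed identity by a matching pair of bounds: an upper bound $\le 1+\widetilde{\mathcal N}_{\mathcal R}(\mathcal{E}_\Lambda)$ that holds for \emph{every} bipartite probe $\Psi$ and POVM $\mathbb{M}$, and a lower bound realized by one explicit choice of $\Psi,\mathbb{M}$. Throughout, a ``free ensemble'' $\mathcal{E}_\Gamma$ means $\{p_j,\Gamma_j\}_j$ with the original priors and $\Gamma_j\in\mathcal{O}_\mathcal{F}$, the maximization in the denominator being over the tuple $(\Gamma_0,\dots,\Gamma_{N-1})$; the key structural move on the lower-bound side is that one may collapse the whole ensemble to a \emph{single}-channel test.

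\emph{Upper bound.} Fix $\Psi,\mathbb{M}$. For each $j$ write the optimal pseudo-mixture \eqref{eq:optimal-decomposition-channel}, $\Lambda_j=(1+s_j)\Gamma_j^\star-s_j\Phi_j^\star$ with $s_j=\mathcal{N}_\mathcal{R}(\Lambda_j)\le\widetilde{\mathcal N}_\mathcal{R}(\mathcal{E}_\Lambda)=:s^\star$ and $\Gamma_j^\star,\Phi_j^\star\in\mathcal{O}_\mathcal{F}$. Since $\Phi_j^\star$ is CP (free maps are Markovian, hence CPTP), $\mathbb{I}\otimes\Phi_j^\star(\Psi)\ge0$, so pairing with $M_j\ge0$ gives $\mathrm{Tr}[\{\mathbb{I}\otimes\Lambda_j(\Psi)\}M_j]\le(1+s_j)\mathrm{Tr}[\{\mathbb{I}\otimes\Gamma_j^\star(\Psi)\}M_j]\le(1+s^\star)\mathrm{Tr}[\{\mathbb{I}\otimes\Gamma_j^\star(\Psi)\}M_j]$. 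Averaging over the prior, $\widetilde p_{\mathrm{succ}}(\mathcal{E}_\Lambda,\mathbb{M},\Psi)\le(1+s^\star)\,\widetilde p_{\mathrm{succ}}(\mathcal{E}_{\Gamma^\star},\mathbb{M},\Psi)\le(1+s^\star)\max_{\Gamma\in\mathcal{O}_\mathcal{F}}\widetilde p_{\mathrm{succ}}(\mathcal{E}_\Gamma,\mathbb{M},\Psi)$, because $\mathcal{E}_{\Gamma^\star}=\{p_j,\Gamma_j^\star\}$ is one admissible free ensemble. Dividing and maximizing over $\Psi,\mathbb{M}$ yields ``$\le$''. Note that only complete positivity of the \emph{free} maps was used, so the $\Lambda_j$ not being CP is harmless.

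\emph{Lower bound.} Let $j^\star$ be an index (with $p_{j^\star}>0$; elements with zero prior may be dropped) achieving $\max_j\mathcal{N}_\mathcal{R}(\Lambda_j)$, set $s^\star=\mathcal{N}_\mathcal{R}(\Lambda_{j^\star})$, and take a witness $X$ optimal for the dual SDP \eqref{eq:sdp-dual} at $\rho_{\Lambda_{j^\star}}$: $X\ge0$, $\mathrm{Tr}[\tau X]\le1$ for all $\tau\in\mathcal{F}$, and $\mathrm{Tr}[\rho_{\Lambda_{j^\star}}X]=1+s^\star$; then $X\ne0$, so $c:=\lambda_{\max}(X)>0$. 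Choose $\Psi=|\Phi^+\rangle\langle\Phi^+|$, for which $\mathbb{I}\otimes\Lambda_j(\Psi)=\rho_{\Lambda_j}$ and $\mathbb{I}\otimes\Gamma_j(\Psi)=\rho_{\Gamma_j}$, and the POVM $M_{j^\star}=X/c$, $M_j=0$ for the other conclusive outcomes, $M_N=\mathbb{I}-X/c$ (admissible since $0\le X/c\le\mathbb{I}$). Only the $j^\star$ term survives in both numerator and denominator, so $p_{j^\star}$ cancels and the ratio equals $\mathrm{Tr}[\rho_{\Lambda_{j^\star}}X]/\max_{\tau\in\mathcal{F}}\mathrm{Tr}[\tau X]=(1+s^\star)/\max_{\tau\in\mathcal{F}}\mathrm{Tr}[\tau X]\ge1+s^\star$, using dual feasibility. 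Combined with the upper bound this forces equality (and, as a byproduct, saturation of the dual constraint at the optimum).

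\emph{Main obstacle.} The inequality chain is routine; the delicate parts are the existence/attainment issues around it. I need the dual optimum to be attained --- this follows from strong duality (asserted above) together with compactness of $\mathcal{F}$ from the earlier theorem; if only a near-optimal $X$ is available one first obtains the statement with $\sup$ and then upgrades to $\max$ using compactness of the sets of bipartite states and $(N{+}1)$-outcome POVMs together with continuity of $\widetilde p_{\mathrm{succ}}$ and of $\max_{\Gamma}\widetilde p_{\mathrm{succ}}$ in $(\Psi,\mathbb{M})$. One must also dispose of the degenerate case in which the denominator vanishes: there the upper-bound inequality forces the numerator to vanish too, so no advantage exists and such $(\Psi,\mathbb{M})$ may be excluded from the maximization.
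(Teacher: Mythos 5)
Your proof is correct and follows essentially the same route as the paper: the upper bound via the optimal pseudo-mixtures $\Lambda_j=(1+s_j)\Gamma_j^\star-s_j\Phi_j^\star$ (using only complete positivity of the free maps), and the lower bound via the probe $\Psi=|\Phi^+\rangle\langle\Phi^+|$ with the POVM built from the dual-SDP witness for $\Lambda_{j^\star}$, all other conclusive outcomes set to zero. Your extra remarks on attainment, zero priors, and the vanishing-denominator case only tighten details the paper leaves implicit (and your inequality direction in the converse fixes what is evidently a sign typo in the paper's displayed chain).
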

\begin{proof}
The proof is in the Supplemental Material.
\end{proof}


\textit{Single-shot information theory}.---In quantum information theory, the accessible information associated with a quantum channel \(\Lambda\) is defined as the maximal classical information that can be conveyed by this quantum channel, maximized over all encodings (the choice of input ensemble) and decodings (the choice of measurements) ~\cite{wilde_classical_2016} 
\begin{align}
I^{\mathrm{acc}}(\Lambda) \equiv \max_{\mathcal{E}, \mathbb{M}}~I(X;Y),
\end{align}
where \(\mathcal{E} = \{ p(x), \omega_x \}_x \) is an ensemble, \(\mathbb{M} = \{M_y\}_y \) is a set of POVM elements, and \(X,Y\) are the random variables associated to the ensemble and the measurement outcomes, respectively. The probability of getting outcome $y$ given the input state $\omega_x$ is \(p(y|x) = \text{Tr}~[\omega_x M_y] \). 

Information-theoretic quantities based on the Shannon (or von Neumann) entropy are usually best suited to asymptotic analysis. Therefore, distinct \textit{single-shot} entropic quantities have been proposed~\cite{renner2014smooth}. The single-shot variant of the accessible information for a channel is defined as
\begin{align}
I_{\min }^{\mathrm{acc}}(\Lambda) \equiv \max_{\mathcal{E}, \mathbb{M}}~I_{\min}(X;Y),
\end{align}
where \(I_{\text{min}}(X;Y) \equiv H_{\text{min}}(X) - H_{\text{min}}(X|Y)\), and \(H_{\text{min}}(X) = -\log\left( \max_x p(x) \right)\), \(H_{\text{min}}(X|Y) = -\log( \sum\limits_{y} \max_{x} p(x,y))\) are the min-entropy and min-conditional entropy, respectively (see also Ref.~\cite{skrzypczykRobustnessMeasurementDiscrimination2018}).

Skrzypczyk and Linden~\cite{skrzypczykRobustnessMeasurementDiscrimination2018} conjectured a connection between robustness measures and information theoretic quantities for quantum resource theories. The following theorem supports this connection for a resource-theoretic approach to non-Markovianity by bounding the difference between the min-accessible information for a non-Markovian map and the maximum min-accessible information over all Markovian maps (see also, related works by Refs.\cite{Bae2016Divisibility,Fanchini2014Accessible}).\\

\begin{theorem}
\label{thm:robustness-connection-min-access-info}
For any map $\Lambda$, Markovian or non-Markovian, we have,
\begin{align*}
I^{\mathrm{acc}}_{\mathrm{min}}(\Lambda) - \max_{ \Gamma \in \mathcal{O}_{\mathcal{F}}} I^{\mathrm{acc}}_{\mathrm{min}}(\Gamma) \leq \log\left( 1 + \mathcal{N}_{\mathcal{R}}(\Lambda) \right).
\end{align*}
\end{theorem}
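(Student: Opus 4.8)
\emph{Proof proposal.} The natural strategy is to feed the optimal pseudo-mixture \eqref{eq:optimal-decomposition-channel} into the definition of $I^{\mathrm{acc}}_{\mathrm{min}}$ and to observe that it produces a \emph{pointwise} domination of the relevant joint probability distributions. Write $s^\star=\mathcal{N}_{\mathcal{R}}(\Lambda)$ and $\Lambda=(1+s^\star)\Gamma^\star-s^\star\Phi^\star$ with $\Gamma^\star,\Phi^\star\in\mathcal{O}_{\mathcal{F}}$. Fix an ensemble $\mathcal{E}^\star=\{p(x),\omega_x\}_x$ and a POVM $\mathbb{M}^\star=\{M_y\}_y$ attaining $I^{\mathrm{acc}}_{\mathrm{min}}(\Lambda)$ (or, if the maximum is not attained, an $\varepsilon$-optimal pair and let $\varepsilon\to0$ at the end). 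For any map $\Theta$ define the induced joint distribution $p_\Theta(x,y)=p(x)\,\mathrm{Tr}\!\big[(\mathbb{I}\otimes\Theta)(\omega_x)M_y\big]$ (the $\mathbb{I}\otimes$ factor being the same device already used in the state-discrimination theorem so that the non-CP map $\Lambda$ still yields a sensible object). Since $\Phi^\star$ is CPTP, $p_{\Phi^\star}(x,y)\ge0$, so the decomposition gives $p_\Lambda(x,y)=(1+s^\star)p_{\Gamma^\star}(x,y)-s^\star p_{\Phi^\star}(x,y)\le(1+s^\star)\,p_{\Gamma^\star}(x,y)$ for every $x,y$.

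Now I would push this inequality through the min-entropies. Because every map involved is trace-preserving and $\sum_y M_y=\mathbb{I}$, the marginal $\sum_y p_\Theta(x,y)=p(x)$ does not depend on $\Theta$, so $H_{\mathrm{min}}(X)$ is the same for $\Lambda$ and for $\Gamma^\star$ with this fixed ensemble. For the conditional min-entropy, the pointwise bound together with $p_{\Gamma^\star}(x,y)\ge0$ gives $\max_x p_\Lambda(x,y)\le(1+s^\star)\max_x p_{\Gamma^\star}(x,y)$ for each $y$; summing over $y$ and applying $-\log(\cdot)$ yields $H_{\mathrm{min}}(X|Y)_{\Gamma^\star}-H_{\mathrm{min}}(X|Y)_{\Lambda}\le\log(1+s^\star)$. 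Subtracting, the equal $H_{\mathrm{min}}(X)$ terms cancel and one gets $I_{\mathrm{min}}(X;Y)_{\Lambda}\le I_{\mathrm{min}}(X;Y)_{\Gamma^\star}+\log(1+s^\star)$, where both mutual informations are evaluated on the \emph{same} pair $(\mathcal{E}^\star,\mathbb{M}^\star)$ — not necessarily optimal for $\Gamma^\star$.

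To finish, I would bound $I_{\mathrm{min}}(X;Y)_{\Gamma^\star}\le I^{\mathrm{acc}}_{\mathrm{min}}(\Gamma^\star)\le\max_{\Gamma\in\mathcal{O}_{\mathcal{F}}}I^{\mathrm{acc}}_{\mathrm{min}}(\Gamma)$, which is legitimate because $\Gamma^\star\in\mathcal{O}_{\mathcal{F}}$, while on the left-hand side $I^{\mathrm{acc}}_{\mathrm{min}}(\Lambda)=I_{\mathrm{min}}(X;Y)_{\Lambda}$ by the choice of optimal data. Chaining the inequalities gives $I^{\mathrm{acc}}_{\mathrm{min}}(\Lambda)-\max_{\Gamma\in\mathcal{O}_{\mathcal{F}}}I^{\mathrm{acc}}_{\mathrm{min}}(\Gamma)\le\log(1+s^\star)=\log\!\big(1+\mathcal{N}_{\mathcal{R}}(\Lambda)\big)$, as claimed.

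The step I expect to require the most care is not the min-entropy algebra, which is essentially mechanical once the pseudo-mixture is in place, but rather the bookkeeping that makes the quantities well defined when $\Lambda$ is not completely positive: one must check that $p_\Lambda(x,y)$ and, in particular, $\sum_y\max_x p_\Lambda(x,y)$ are controlled so that $H_{\mathrm{min}}(X|Y)_\Lambda$ entering $I^{\mathrm{acc}}_{\mathrm{min}}(\Lambda)$ is meaningful and the monotonicity of $-\log$ can be invoked. This is handled exactly as in \cref{thm:operational-meaning}, where only the aggregate success probability — not the individual terms — needs to stay positive; here one uses that the dominating distribution $p_{\Gamma^\star}$ is genuine and that $\sum_y\max_x p_{\Gamma^\star}(x,y)\ge\max_x p(x)>0$, so the right-hand side of every inequality is strictly positive and $-\log$ is order-reversing on the range that matters.
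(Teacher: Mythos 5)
Your proposal is correct and follows essentially the same route as the paper's proof: insert the optimal pseudo-mixture $\Lambda=(1+s^\star)\Gamma^\star-s^\star\Phi^\star$, drop the nonnegative $\Phi^\star$ contribution (which is exactly where $\Phi^\star\in\mathcal{O}_{\mathcal{F}}$ is needed), use monotonicity of $\log$, and then bound the Markovian term by the maximum over $\mathcal{O}_{\mathcal{F}}$. The only differences are presentational — you organize the estimate as a pointwise domination of joint distributions and a min-entropy difference (with the welcome extra care that the $X$-marginal and positivity of $\sum_y\max_x p(x,y)$ survive trace preservation), whereas the paper works directly with the single-log ratio inside the maximization — so no substantive gap.
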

\begin{proof}
The proof is in the Supplemental Material.
\end{proof}
This result shows that the maximal amount of min-information that can be generated between the input and output of a non-Markovian map $\Lambda$ when compared with all Markovian maps has an upper bound that depends on the RoNM. Therefore, this difference grows at most logarithmically as the robustness of non-Markovianity grows for the corresponding evolution. 

Theorems \ref{thm:operational-meaning},\ref{thm:operational-interpretation-channel-disc}, and \ref{thm:robustness-connection-min-access-info} elucidate the operational significance of the RoNM by completing the triangle of associations between a robustness-based measure, advantage in discrimination games, and connection with information-theoretic quantities as conjectured by Skrzypczyk and Linden~\cite{skrzypczykRobustnessMeasurementDiscrimination2018}.

\textit{Relating the robustness of non-Markovianity to the RHP measure}.---The RHP measure~\cite{rivas_entanglement_2010} is arguably the ``gold standard'' for non-Markovianity measures in the CP-divisibility framework. We now relate the RoNM to the RHP measure. Recall (see page 3 of Ref.~\cite{rivas_entanglement_2010}) that the RHP measure for some evolution \(\Lambda\) is defined as
\begin{align}
\mathcal{I}=\int_{0}^{\infty} dt \lim _{\epsilon \rightarrow 0^{+}} \frac{\left(f_{\mathrm{NCP}}(t+\epsilon, t)-1 \right) }{\epsilon},
\end{align}
where $f_{\mathrm{NCP}}(t+\epsilon, t):= \| (\Lambda _ { ( t + \epsilon , t ) } \otimes \mathbb { I } ) ( | \Phi \rangle \langle \Phi | ) \| _ { 1 } = \norm{\rho_\Lambda}_1$. Note that by construction, the RoNM depends on both $t$ and $\epsilon$, which we remove by defining
\begin{align}
\mathcal{N}^{\text{total}}_{\mathcal{R}}(\rho_\Lambda) \equiv \int\limits_{0}^{\infty} dt \lim_{\epsilon \rightarrow 0^+} \left( \frac{\mathcal{N}_{\mathcal{R}}(\rho_\Lambda)}{\epsilon} \right).
\end{align}
Combining this with the observation that \(\mathcal{N}_{\mathcal{R}}(\rho_\Lambda)=(\norm{\rho_\Lambda}_{1} - 1)/2\), we have
\begin{align}
\mathcal{N}^{\text{total}}_{\mathcal{R}} = \frac{\mathcal{I}}{2}. 
\end{align}
A detailed proof along with an analytical example for the dephasing channel is given in the Supplemental Material.

Quite remarkably, it turns out that the RoNM, which is purely \textit{operationally} motivated is exactly equal to one-half the RHP measure, which is purely \textit{physically} motivated. As a result, the properties of faithfulness, convexity, and monotonicity under composition with Markovian maps follows for the RHP measure. Moreover, theorems \ref{thm:operational-meaning},\ref{thm:operational-interpretation-channel-disc}, and \ref{thm:robustness-connection-min-access-info} provide direct operational significance to the RHP measure.

\textit{Discussion}.---In this work, we have constructed a resource-theoretic measure of quantum non-Markovianity (in the CP-divisibility sense) with a direct operational interpretation:  the Robustness of Non-Markovianity (RoNM). By identifying a meaningful set of \textit{free operations} and carefully characterizing quantum non-Markovianity, we constructed a measure that is faithful, convex, and monotonic. Using the semidefinite programming form for the RoNM, we established an operational interpretation via both a state and a channel discrimination task. Moreover, we connected this measure to single-shot information theory, thereby, completing the triangle of associations as conjectured by Skrzypczyk and Linden~\cite{skrzypczykRobustnessMeasurementDiscrimination2018}. We also obtained an optimization-free, closed-form expression for this measure. 

Remarkably, the operationally motivated RoNM measure turns out to be exactly half the RHP measure, which is physically motivated by the system-ancilla entanglement dynamics~\cite{rivas_entanglement_2010}. This intriguing connection was obtained by using the powerful results that underlie quantum resource theories, which speaks volumes about the efficacy of the resource-theoretic approach in characterizing quantum resources. These results provides a direct operational meaning to the well-known RHP measure in terms of channel and state discrimination tasks, and a connection to single-shot information theory. Moreover, not only does the RHP measure borrow the operational meaning of the RoNM, but the RoNM inherits the physical interpretation of the RHP measure.

Several open questions emerge from our work. First, natural candidates for resource measures are distance-based quantifiers, which measure the distance of a resourceful map from the set of free maps. It will be interesting to connect these to other relevant measures using the gauge functions formalism~\cite{regulaConvexGeometryQuantum2018} and to see if these relationships can be used to give operational interpretations to other measures.. Second, are there other operational measures that can quantify non-Markovianity and if yes, how do they relate to the RoNM? Moreover, can the resource-theoretic approach give operational meaning to the zoo of non-Markovianity quantifiers, like those based on the quantum Fisher information~\cite{luFisherInformation2010}, degree of non-Markovianity~\cite{chruscinskiDegreeNonMarkovianity2014}, relative entropy of coherence~\cite{zhiNonMarkovianityRelativeEntropy}, quantum interferometric power~\cite{dhar_characterizing_2015}, and others~\cite{PhysRevA.89.052119}. Finally, future work will explore the problem of characterizing the information backflow approach to non-Markovianity~\cite{breuerMeasureDegreeNonMarkovian2009} using resource-theoretic constructions.

\textit{Note added}.---After the completion of this manuscript, we became aware of the independent work of Samyadeb \textit{et al.}~\cite{bhattacharyaResourceTheoryNonMarkovianity2018a} (note the arXiv-v2 instead of the v1) where a robustness measure for non-Markovianity was introduced. Their definition differs from ours in several consequential ways: (i) our robustness measure is defined with respect to the set of all Markovian maps while theirs is with respect to the set of all maps (Markovian or non-Markovian). (ii) As a consequence of (i), their robustness measure neither enjoys the operational interpretations, via the state and channel discrimination games above, nor does it connect to single-shot information theory in any straightforward manner (indeed it would specifically violate the proofs for these). (iii) Our definition of RoNM has a clear analytical relation to the RHP measure. It is unclear if this equivalence holds when generalized to mixing with non-Markovian maps. (iv) The authors refer to the Markovian Choi states as the ``free states'' in their resource theory, which we deprecate for two reasons. First, as discussed in the introduction, there are no free states in this construction since non-Markovianity is a dynamical resource and so the relevant objects are the free operations. Second, the Choi matrices for non-Markovian maps yield ``free states'' that are not quantum states (since they can have negative eigenvalues). In summary, although their definition of the robustness measure is more general, it seems that the physical and information-theoretic relations are not carried over in a straightforward way.

\textit{Acknowledgments}.---NA and TAB would like to thank Bartosz Regula, Shiny Choudhury, Yi-Hsiang Chen, Shengshi Pang, Chris Sutherland, Bibek Pokharel, Adam Pearson, and Haimeng Zhang for illuminating discussions. This work was supported in part by NSF Grant No.~QIS-1719778.

\bibliographystyle{apsrev4-1}
\bibliography{main}

\clearpage
\begin{center}
\vspace*{\baselineskip}
{\textbf{\large Supplemental Material for ``Quantifying non-Markovianity: a quantum resource-theoretic approach''}}\\[1pt] \quad \\
\end{center}

\renewcommand{\theequation}{S\arabic{equation}}
\setcounter{equation}{0}
\setcounter{figure}{0}
\setcounter{table}{0}
\setcounter{section}{0}
\setcounter{page}{1}
\makeatletter

\section{Choi-Jamio\l kowski isomorphism}
Using Theorem 3.3 and 3.4 of Ref.~\cite{rivasQuantumNonMarkovianityCharacterization2014}, we have
\begin{align}
\Lambda \text{ is CP } \iff \left\| \left[ \Lambda \otimes \mathbb { I } \right] ( \tilde { \Delta } ) \right\| _ { 1 } \leqslant \| \tilde { \Delta } \| _ { 1 } ,
\end{align}
for \(\widetilde{\Delta} \in \text{Herm}(\mathcal{H} \otimes \mathcal{H})\). However, if \(\Lambda\) is not CP, then, we will have
\begin{align}
\left\| \left[ \Lambda \otimes \mathbb { I } \right] ( \tilde { \Delta } ) \right\| _ { 1 } > \| \tilde { \Delta } \| _ { 1 } . 
\end{align}
Since \(|\Phi^+ \rangle \langle  \Phi^+ |\) is positive semidefinite, we have,
\begin{align}
\label{eq:choi-cp}
\left\| \left[ \Lambda \otimes \mathbb { I } \right] ( | \Phi \rangle \langle \Phi | ) \right\| _ { 1 } \left\{ \begin{array} { l l } { = 1 } & { \text { iff } \Lambda \text { is } \mathrm { CP } } \\ { > } & { 1  \text { otherwise. } } \end{array} \right.
\end{align}

\section{The set of Markovian Choi matrices are a convex and compact set in the small-time limit}

\begin{theorem}[Ref.~\cite{bhattacharya_convex_2018}, main text] The set of all Markovian Choi matrices,
$$\mathcal { F }^ { \epsilon, t } = \left\{ \rho_\Lambda ( t + \epsilon , t ) \mid \left\| \rho_\Lambda ( t + \epsilon , t ) \right\| _ { 1 } = 1 , \forall t , \epsilon >0 \right\}$$
is a convex and compact set in the limit \(\epsilon \rightarrow 0^{+}\).
\end{theorem}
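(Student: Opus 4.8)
The plan is to reduce the statement to the linear structure of GKLS generators via a first-order expansion in $\epsilon$. For a CP-divisible (Markovian) family $\{\Lambda_{(t_2,t_1)}\}$ the intermediate map has a Lindblad-type generator $\mathcal{L}_t$ as in \cref{eq:gkls-theorem}, so $\Lambda_{(t+\epsilon,t)} = \mathrm{id} + \epsilon\,\mathcal{L}_t + O(\epsilon^2)$ and, applying $\mathbb{I}\otimes(\cdot)$ to $|\Phi^+\rangle\langle\Phi^+|$,
\[
\rho_\Lambda(t+\epsilon,t) \;=\; |\Phi^+\rangle\langle\Phi^+| \;+\; \epsilon\,\Omega_{\mathcal{L}_t} \;+\; O(\epsilon^2), \qquad \Omega_{\mathcal{L}_t}:=[\mathbb{I}\otimes\mathcal{L}_t]\bigl(|\Phi^+\rangle\langle\Phi^+|\bigr).
\]
Thus, in the limit $\epsilon\to 0^+$, $\mathcal{F}^{\epsilon,t}$ is the image of the set $\mathcal{G}$ of admissible generators under the affine map $\mathcal{L}\mapsto |\Phi^+\rangle\langle\Phi^+| + \epsilon\,\Omega_{\mathcal{L}}$. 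First I would record the standard ``conditional complete positivity'' characterization of GKLS generators: a Hermiticity-preserving, trace-annihilating $\mathcal{L}$ is of Lindblad form iff $P^\perp\,\Omega_{\mathcal{L}}\,P^\perp \ge 0$ with $P^\perp := \mathbb{I}\otimes\mathbb{I} - |\Phi^+\rangle\langle\Phi^+|$ (this repackages the requirement $\gamma_k\ge 0$). Hence $\mathcal{G}$ is cut out by linear equalities (Hermiticity preservation, trace annihilation) together with one positive-semidefinite inequality on $\Omega_{\mathcal{L}}$.

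For convexity it then suffices to show $\mathcal{G}$ is convex, which is the one genuinely ``resource-theoretic'' input: given GKLS generators $\mathcal{L}_1,\mathcal{L}_2$ with data $(H_a,\{\gamma^{(a)}_k,V^{(a)}_k\})$, the convex combination $p\mathcal{L}_1+(1-p)\mathcal{L}_2$ is again GKLS, with Hamiltonian $pH_1+(1-p)H_2$ and the merged list of jump operators carrying the nonnegative rates $p\gamma^{(1)}_k$ and $(1-p)\gamma^{(2)}_k$; equivalently $P^\perp\Omega_{p\mathcal{L}_1+(1-p)\mathcal{L}_2}P^\perp = p\,P^\perp\Omega_{\mathcal{L}_1}P^\perp + (1-p)\,P^\perp\Omega_{\mathcal{L}_2}P^\perp\ge 0$. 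Since $\mathcal{L}\mapsto\Omega_{\mathcal{L}}$ is linear, the affine map above carries the convex set $\mathcal{G}$ to a convex set, so $\mathcal{F}^{\epsilon,t}$ is convex in the limit. It is worth noting that the short-time limit is essential: at finite $\epsilon$ one must use $\Lambda_{(t+\epsilon,t)}=\mathcal{T}\exp\!\int_t^{t+\epsilon}\mathcal{L}_s\,ds$, and convexity of the resulting set of Choi matrices genuinely fails.

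For compactness I would treat boundedness and closedness separately. Boundedness is immediate: to leading order every element is a density operator on $\mathcal{H}\otimes\mathcal{H}$ (positive semidefinite with unit trace, hence unit trace norm), and $\mathcal{D}(\mathcal{H}\otimes\mathcal{H})$ is bounded; concretely, the constraint $\|\rho_\Lambda(t+\epsilon,t)\|_1 = 1$, i.e.\ positivity of $|\Phi^+\rangle\langle\Phi^+| + \epsilon\Omega_{\mathcal{L}}$, forces $\mathrm{Tr}\bigl(P^\perp\Omega_{\mathcal{L}}P^\perp\bigr)\le 1/\epsilon + o(1/\epsilon)$, which with $P^\perp\Omega_{\mathcal{L}}P^\perp\ge 0$ bounds that block, and the Schur-complement condition for positivity then controls the off-diagonal (Hamiltonian) part. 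Closedness is where the real work lies: one shows the defining conditions — Hermiticity preservation, trace annihilation, and $P^\perp\Omega_{\mathcal{L}}P^\perp\ge 0$ — are each closed, so that if $\rho^{(n)}_\Lambda\to\rho$ then the corresponding $\Omega_{\mathcal{L}^{(n)}}$ have (by the bound just derived) a convergent subsequence whose limit is again an admissible $\Omega_{\mathcal{L}}$, the positive-semidefinite cone being closed. The main obstacle, in my view, is exactly this closedness: one must ensure the $\epsilon\to 0^+$ limit is taken so that generators cannot escape to infinity (rates $\gamma_k\to\infty$, whose formal limits are non-invertible boundary maps such as the completely dephasing channel, which are not themselves of the form $\mathrm{id}+\epsilon\mathcal{L}$); the unit-trace-norm constraint is precisely what rules this out, and turning that observation into a quantitative bound is the crux of the argument.
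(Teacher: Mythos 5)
Your proposal is correct and follows essentially the same route as the paper's proof: expand $\Lambda_{(t+\epsilon,t)}=\mathbb{I}+\epsilon\mathcal{L}_t+O(\epsilon^2)$, observe that a convex combination of GKLS generators (nonnegative rates, merged jump-operator lists) is again a GKLS generator, and transfer convexity to the Choi matrices via the isomorphism, with compactness from closed-plus-bounded in finite dimensions. Your additions---the conditional-complete-positivity reformulation $P^{\perp}\Omega_{\mathcal{L}}P^{\perp}\geq 0$ and the explicit bounding of generators to justify closedness---are sound refinements of steps the paper dispatches in one line, not a different argument.
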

\begin{proof}
Consider two Markovian maps,
\begin{align}
\Lambda^{(i)}(t+\epsilon,t)\equiv \mathcal{T}\exp\left(\int_{t}^{t+\epsilon}\mathcal{L}_\tau^{(i)}d\tau\right),~i=1,2 .
\end{align}
For sufficiently small \(\epsilon\), we can Taylor expand the exponential. Then, neglecting terms of $O(\epsilon^2)$ and higher, we have
\begin{align}
\Lambda^{(i)}(t+\epsilon,t)=\mathbb{I}+\epsilon\mathcal{L}_t^{(i)},~i=1,2 .
\end{align}
A convex combination of these maps is
\begin{align}
\Lambda(t+\epsilon,t) &= p\Lambda^{(1)}(t+\epsilon,t)+(1-p)\Lambda^{(2)}(t+\epsilon,t) \nonumber\\
&=\mathbb{I}+\epsilon[p\mathcal{L}_t^{(1)}+(1-p)\mathcal{L}_t^{(2)}]=\mathbb{I}+\epsilon\mathcal{L}_t,
\end{align}
with \(\mathcal{L}_t=p\mathcal{L}_t^{(1)}+(1-p)\mathcal{L}_t^{(2)}\) and \(0\leq p\leq 1\). That is, we have a Lindblad type generator $\mathcal{L}_t$ with positive coefficients. Therefore, $\Lambda(t+\epsilon,t) \in \mathcal { O }_{\mathcal{F}}$, which implies that the set of Markovian maps forms a convex set. Then, using the Choi-Jamio\l kowski isomorphism, the Choi matrices corresponding to the Markovian maps also form a convex set.

As for compactness, it is easy to prove that $\mathcal { F }^ { \epsilon, t } $ is closed and bounded using the continuity of the 1-norm, along with the fact that all norms are equivalent in finite dimensions.
\end{proof}

\begin{figure}[ht]
\centering
\includegraphics[width=0.45\textwidth]{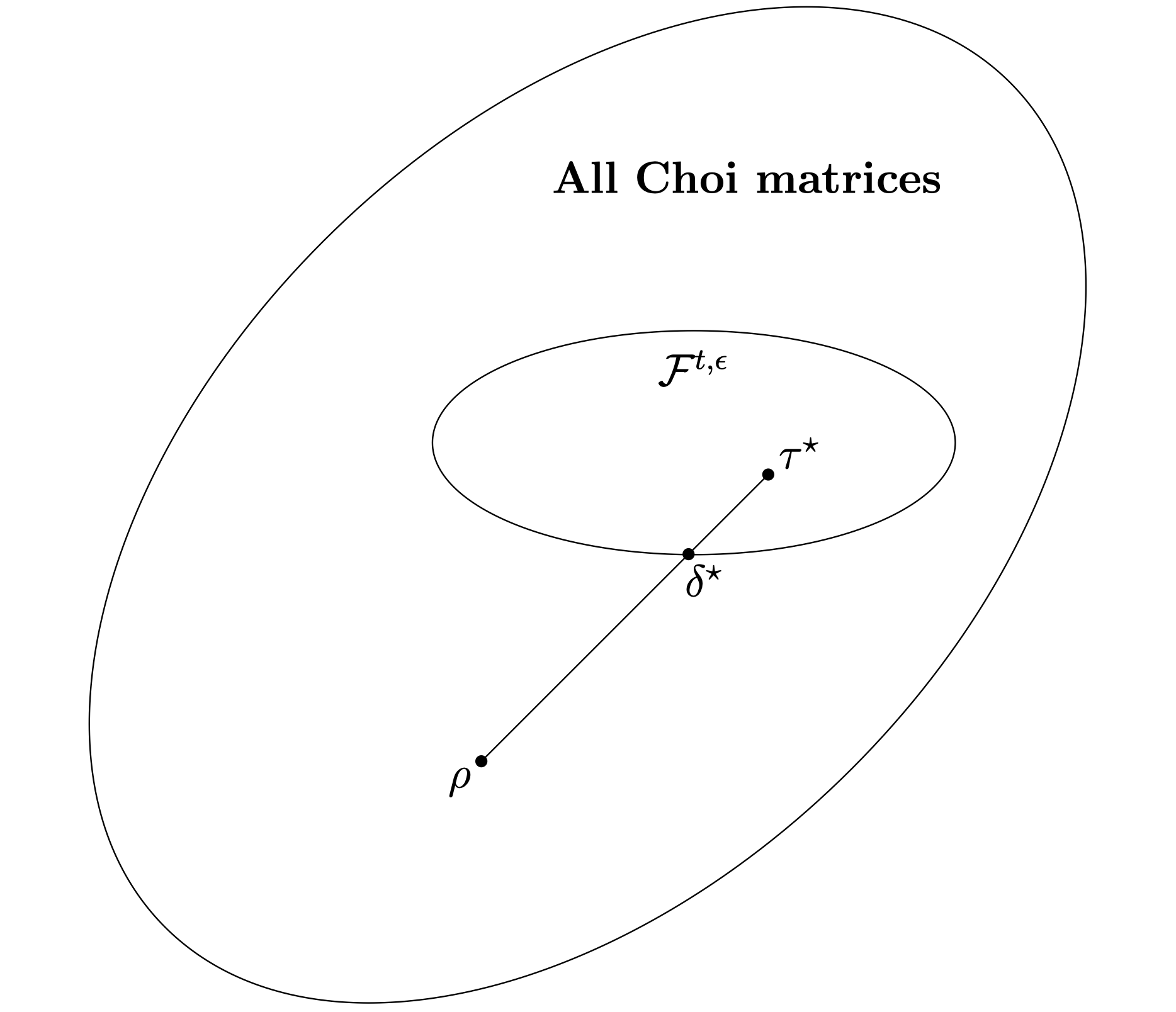}
\caption{Geometrical interpretation of the robustness of non-Markovianity for a Choi matrix $\rho$. The outer ellipse is the (convex) set of all Choi matrices and $\mathcal{F}^{t, \epsilon}$ is the (convex) set of Markovian Choi matrices. The $\tau^\star,\delta^\star$ correspond to the optimal decomposition.}
\end{figure}

\section{Properties of \texorpdfstring{R\MakeLowercase{o}}{Ro}NM}
\emph{Faithfulness.}---Faithfulness follows directly from the definition. If \(\Lambda \in \mathcal{O}_{\mathcal{F}}\), then clearly \(\mathcal{N}_{\mathcal{R}}(\Lambda) = 0\), since we do not need to mix any amount of Markovian noise to make it Markovian. Conversely, \(\mathcal{N}_{\mathcal{R}}(\Lambda) = 0 \implies \Lambda \in \mathcal{O}_{\mathcal{F}}\).

\emph{Convexity.}---Given two Choi matrices \(\rho_{1}\) and \(\rho_{2}\) with optimal decompositions (see main text, Eq. (5)), we have \(\rho_{k} = \left( 1+ \mathcal{N}_{\mathcal{R}}(\rho_{k}) \right) \delta_{k}^{\star} - \mathcal{N}_{\mathcal{R}}(\rho_{k}) \tau^{\star}_{k}\), where \(k=1,2\). Then, a convex combination of these two matrices, \(\rho = p \rho_{1} + \left( 1-p \right) \rho_{2}\), with \(0 \leq p \leq 1\) can be written as \(\rho = \left( 1+s \right) \delta - s \tau\), by choosing
\begin{align*}
  \delta &= \frac{p \left( 1+ \mathcal{N}_{\mathcal{R}}(\rho_{1}) \right) \delta_{1}^{\star} + \left( 1-p \right) \left( 1+ \mathcal{N}_{\mathcal{R}}(\rho_{2}) \right) \delta_{2}^{\star}}{\left( 1+s \right)} \in \mathcal{F},\\
\tau &= \frac{p \mathcal{N}_{\mathcal{R}}(\rho_{1}) \tau_{1}^{\star} + \left( 1-p \right) \mathcal{N}_{\mathcal{R}}(\rho_{2}) \tau_{2}^{\star}}{s} \in \mathcal{F}, \text{ and}\\
s &= p \mathcal{N}_{\mathcal{R}}(\rho_{1}) + \left( 1-p \right) \mathcal{N}_{\mathcal{R}}(\rho_{2}).
\end{align*}

And since \(\mathcal{N}_{\mathcal{R}}(\rho) \leq s\), we have
\begin{align}
\label{eq:convexity-robustness}
 \mathcal{N}_{\mathcal{R}} \left( p \rho_{1} + \left( 1-p \right) \rho_{2} \right) \leq p \mathcal{N}_{\mathcal{R}}(\rho_{1}) + \left( 1-p \right) \mathcal{N}_{\mathcal{R}}(\rho_{2}).
\end{align}
A similar proof follows for the robustness of the map corresponding to the Choi matrix \(\rho_{\Lambda}\).

\emph{Monotonicity under free operations.}---The free operations in this setting are the Markovian maps, which are \textit{quantum channels} or \textit{completely positive trace-preserving} (CPTP) maps. Consider the composition of a non-Markovian map, say, \(\Lambda\), with a free operation, say \(\Gamma\), i.e., $\Gamma \circ \Lambda \equiv \Gamma \left( \Lambda \right)$. Then, using the closed form expression for \(\mathcal{N}_{\mathcal{R}}(\Lambda)\) in terms of its corresponding Choi matrix \(\rho_{\Lambda}\) (see \cref{eq:closed-form-robustness-choi}), we have,
$$\mathcal{N}_{\mathcal{R}}( \Gamma \circ \Lambda) = \frac{\norm{\rho_{ \left( \Gamma \circ \Lambda \right)}}_{1} -1}{2},$$
$$= \frac{\norm{ \left(\mathbb{I} \otimes \Gamma \right) \left\{ \mathbb{I} \otimes \Lambda \left( |\Phi^{+} \rangle \langle  \Phi^{+} | \right) \right\} }_{1} - 1}{2},$$
$$= \frac{\norm{\left( \mathbb{I} \otimes \Gamma \right) \rho_{\Lambda}}_{1} - 1}{2}, $$
$$ \leq  \frac{\norm{\rho_{\Lambda}}_{1} - 1}{2} =\mathcal{N}_{\mathcal{R}}(\Lambda),$$
where in the last inequality, we've used the fact that if \(\Lambda\) is a quantum channel and \(\Delta \in \mathcal{B}(\mathcal{H} \otimes \mathcal{H})\) is Hermitian, then, \(\norm{ \left( \mathbb{I} \otimes \Lambda \right) (\Delta)}_{1} \leq \norm{\Delta}_{1}\) (see \cref{eq:choi-cp}). Therefore, the RoNM measure is monotonic under composition with free operations. Together, these imply that the RoNM is a faithful, convex, and monotonic measure of quantum non-Markovianity.

\section{\texorpdfstring{R\MakeLowercase{o}}{Ro}NM as advantage in state discrimination}
Given an ensemble of quantum states, \(\mathcal{E}=\left\{p_{j}, \sigma_{j} \right\}_j\) and a map \(\mathbb{I} \otimes \Lambda\), we are to distinguish which state $\sigma_j$ has been selected from the ensemble by a single application of the map followed by a measurement with positive operator-valued measure (POVM) elements \(\mathbb{M} = \left\{M_{j} \right\}_j, \text{ s.t. } M_j \geq 0, \sum_j M_j = \mathbb{I}\). The average success probability for this task is \(p _ { \mathrm { succ } } \left( \mathcal{E} , \mathbb{M} , \mathbb { I } \otimes \Lambda \right) = \sum_j p _ { j } \operatorname { Tr } \left[ \mathbb { I } \otimes \Lambda \left( \sigma _ { j } \right) M _ { j } \right]\). Then, we have the following theorem.\\

\textit{Remark}.---For ease of understanding, we make the ensemble, \(\mathcal{E}=\left\{p_{j}, \sigma_{j} \right\}_j\) and the POVM, \(\mathbb{M} = \left\{M_{j} \right\}_j, \text{ s.t. } M_j \geq 0, \sum_j M_j = \mathbb{I}\) explicit in the theorem below.\\

\begin{theorem}[main text]
For any map \(\Lambda\), Markovian or non-Markovian, we have,
\begin{align*}
\max_{ \mathcal{E}, \mathbb{M}} \frac{p _ { \mathrm { succ } } \left( \mathcal{E}, \mathbb{M}, \mathrm { I } \otimes \Lambda \right) }{\max\limits_ { \Gamma \in \mathcal{O} _ { \mathcal { F } } } p _ { \mathrm { succ } } \left( \mathcal{E}, \mathbb{M} , \mathbb { I } \otimes \Gamma \right)} = 1 + \mathcal{N}_{\mathcal{R}}(\Lambda ).
\end{align*}
\end{theorem}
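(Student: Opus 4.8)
\emph{Proof proposal.}---The plan is to prove the two inequalities separately. For the upper bound, fix any ensemble $\mathcal{E}$ and POVM $\mathbb{M}$ and insert the optimal pseudo-mixture $\Lambda = (1+s^{\star})\Gamma^{\star}-s^{\star}\Phi^{\star}$ of \cref{eq:optimal-decomposition-channel}, with $s^{\star}=\mathcal{N}_{\mathcal{R}}(\Lambda)$ and $\Gamma^{\star},\Phi^{\star}\in\mathcal{O}_{\mathcal{F}}$. Since $p_{\mathrm{succ}}(\mathcal{E},\mathbb{M},\mathbb{I}\otimes\,\cdot\,)$ is linear in the map, $p_{\mathrm{succ}}(\mathcal{E},\mathbb{M},\mathbb{I}\otimes\Lambda)=(1+s^{\star})\,p_{\mathrm{succ}}(\mathcal{E},\mathbb{M},\mathbb{I}\otimes\Gamma^{\star})-s^{\star}\,p_{\mathrm{succ}}(\mathcal{E},\mathbb{M},\mathbb{I}\otimes\Phi^{\star})$. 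Because $\Phi^{\star}$ is Markovian, hence CPTP, the operators $(\mathbb{I}\otimes\Phi^{\star})(\sigma_j)$ are positive semidefinite and the last term is nonnegative; dropping it gives $p_{\mathrm{succ}}(\mathcal{E},\mathbb{M},\mathbb{I}\otimes\Lambda)\le (1+s^{\star})\,p_{\mathrm{succ}}(\mathcal{E},\mathbb{M},\mathbb{I}\otimes\Gamma^{\star})\le (1+s^{\star})\max_{\Gamma\in\mathcal{O}_{\mathcal{F}}}p_{\mathrm{succ}}(\mathcal{E},\mathbb{M},\mathbb{I}\otimes\Gamma)$. Since this holds for every $\mathcal{E},\mathbb{M}$, the maximized ratio is at most $1+\mathcal{N}_{\mathcal{R}}(\Lambda)$, and the argument uses nothing about the sign of the numerator.

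For the matching lower bound I would invoke strong duality for the SDP of \cref{eq:sdp-main,eq:sdp-dual}: strict feasibility holds because $\mathcal{F}$ contains a full-rank Choi matrix (e.g.\ the one near the maximally mixed Choi state, coming from a depolarizing Lindbladian), which also makes the dual feasible set bounded, so the optimum is attained at some $X^{\star}\ge 0$. After rescaling we may take $\mathrm{Tr}[\delta X^{\star}]\le 1$ for all $\delta\in\mathcal{F}$ with $\max_{\delta\in\mathcal{F}}\mathrm{Tr}[\delta X^{\star}]=1$, and then $\mathrm{Tr}[\rho_{\Lambda}X^{\star}]=1+\mathcal{N}_{\mathcal{R}}(\Lambda)$. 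Set $Z:=X^{\star}/\|X^{\star}\|_{\infty}$, so $0\le Z\le\mathbb{I}$. Take the input ensemble to consist entirely of copies of $|\Phi^{+}\rangle\langle\Phi^{+}|$, together with a POVM $\{M_j\}$ and probabilities $\{p_j\}$ chosen so that $\sum_j p_j M_j = Z$ (always possible for $Z\in[0,\mathbb{I}]$, e.g.\ the two-outcome POVM $\{Z,\mathbb{I}-Z\}$). Using $(\mathbb{I}\otimes\Lambda)(|\Phi^{+}\rangle\langle\Phi^{+}|)=\rho_{\Lambda}$ by definition of the Choi matrix, and the fact that $\rho_{\Gamma}$ ranges over exactly $\mathcal{F}$ as $\Gamma$ ranges over $\mathcal{O}_{\mathcal{F}}$, this yields $p_{\mathrm{succ}}(\mathcal{E},\mathbb{M},\mathbb{I}\otimes\Lambda)=\mathrm{Tr}[\rho_{\Lambda}Z]$ and $\max_{\Gamma\in\mathcal{O}_{\mathcal{F}}}p_{\mathrm{succ}}(\mathcal{E},\mathbb{M},\mathbb{I}\otimes\Gamma)=\max_{\delta\in\mathcal{F}}\mathrm{Tr}[\delta Z]$. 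Since the ratio is invariant under positive rescaling of $Z$, it equals $\mathrm{Tr}[\rho_{\Lambda}X^{\star}]/\max_{\delta\in\mathcal{F}}\mathrm{Tr}[\delta X^{\star}]=1+\mathcal{N}_{\mathcal{R}}(\Lambda)$, which meets the upper bound and closes the proof.

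The main obstacle I anticipate is the achievability half: confirming that the abstract dual optimizer $X^{\star}$ can be implemented by a genuine ensemble–POVM pair despite the normalization $\sum_j M_j=\mathbb{I}$. The move that makes this painless is collapsing the input ensemble to copies of $|\Phi^{+}\rangle$, which reduces the task to optimizing over the single effective operator $Z=\sum_j p_j M_j\in[0,\mathbb{I}]$ and thereby to SDP strong duality; a secondary technical point is verifying the Slater-type condition and compactness that guarantee strong duality and attainment of the dual optimum. One should also bear in mind that $\Lambda$ need not be completely positive, so $p_{\mathrm{succ}}(\mathcal{E},\mathbb{M},\mathbb{I}\otimes\Lambda)$ must be treated as a formal bilinear functional rather than a bona fide probability, but none of the steps above use its positivity.
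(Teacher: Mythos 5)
Your proposal is correct and follows essentially the same route as the paper: the upper bound via the optimal pseudo-mixture and dropping the nonnegative $\Phi^{\star}$ term, and the achievability via the dual-SDP optimizer $X^{\star}$, rescaled by $\Vert X^{\star}\Vert_{\infty}$ into a two-outcome measurement applied to the maximally entangled input (the paper's ensemble with $p_0=1$, $\sigma_0=|\Phi^{+}\rangle\langle\Phi^{+}|$ is the same device as your ``copies of $|\Phi^{+}\rangle$'' construction). The only difference is cosmetic: you argue exact equality at the witness by normalizing $\max_{\delta\in\mathcal{F}}\mathrm{Tr}[\delta X^{\star}]=1$ and spell out the Slater/attainment issue that the paper asserts without proof, whereas the paper derives the lower bound as an inequality and closes the gap against the upper bound.
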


\begin{proof}
Consider the optimal decomposition of \(\Lambda\),
$$ \Lambda = \left( 1+s^{\star} \right) \Gamma - s^{\star} \Phi, $$
where \(s^{\star} = \mathcal{N}_{\mathcal{R}}(\Lambda)\), and \(\Gamma, \Phi \in \mathcal{O}_{\mathcal{F}}\). Then, the average success probability is
\begin{widetext}
\begin{align}
&\sum\limits_{j} p _ { j } \operatorname { Tr } \left[ \mathrm { I } \otimes \Lambda \left( \sigma _ { j } \right) M _ { j } \right] = \sum\limits_{j} p _ { j } \left( 1+ s^{\star} \right) \operatorname { Tr } \left[ \mathrm { I } \otimes \Gamma \left( \sigma _ { j } \right) M _ { j } \right]
- \underbrace{ \sum\limits_{j} p _ { j } s^{\star} \operatorname { Tr } \left[ \mathrm { I } \otimes \Phi \left( \sigma _ { j } \right) M _ { j } \right]}_{ \geq 0} \label{eq:state-disc-eqn1} \\ & \leq \sum\limits_{j} p _ { j } \left( 1+ s^{\star} \right) \operatorname { Tr } \left[ \mathrm { I } \otimes \Gamma \left( \sigma _ { j } \right) M _ { j } \right] \leq \left( 1+ s^{\star} \right) \max_{\Gamma \in \mathcal{O}_{\mathcal{F}} } p _ { \mathrm { succ } } \left( \left\{ p _ { j } , \sigma _ { j } \right\} , \left\{ M _ { j } \right\} , \mathrm { I } \otimes \Gamma \right) , \nonumber
\end{align}
\end{widetext}
where \(s^{\star} = \mathcal{N}_{\mathcal{R}}(\Lambda)\). In \cref{eq:state-disc-eqn1}, the under-braced term is non-negative, since in our definition of the RoNM we choose \(\Phi \in \mathcal{O}_{\mathcal{F}}\). However, if we were to define a ``generalized'' robustness measure (in the sense of mixing w.r.t all maps), this would not hold true.  Therefore, we have that
\begin{align}\label{eq:upperBound}
\frac{p _ { \mathrm { succ } } \left( \left\{ p _ { j } , \sigma _ { j } \right\} , \left\{ M _ { j } \right\} , \mathrm { I } \otimes \Lambda \right) }{\max\limits_ { \Gamma \in O _ { \mathcal { F } } } p _ { \mathrm { succ } } \left( \left\{ p _ { j } , \sigma _ { j } \right\} , \left\{ M _ { j } \right\} , \mathbb { I } \otimes \Gamma \right)} \leq 1 + \mathcal{N}_{\mathcal{R}}(\Lambda).
\end{align}
Then, by maximizing over all ensembles, $\left\{p_{j}, \sigma_{j} \right\}$ and all POVMs, $ \{ M_{j} \}$, we have that the LHS is $\leq 1 + \mathcal{N}_{\mathcal{R}}(\Lambda)$.

We now prove the converse part of the theorem. Recall that for a POVM element to form a valid measurement, we must have $M_j \leq \mathbb{I}\, \forall j$. Then, $\norm{M_j}_\infty \leq 1$. Therefore, we consider a specific measurement
\begin{align}
\left\{ \frac { X } { \| X \| _ { \infty } } , \mathbb { I } - \frac { X } { \| X \| _ { \infty } } \right\} , \nonumber
\end{align}
where \(X\) is an optimal solution of the dual SDP (see main text, Eq. (12)), and the ensemble \(\{ p_{j}, \sigma_{j} \}_{j=0}^{1}\), with \(p_{0} = 1, \sigma_{0} = |\Phi^{+} \rangle \langle  \Phi^{+} |\), where \(| \Phi^{+} \rangle\) is the (normalized) maximally entangled state, and \(p_{1} = 0\) and \(\sigma_{1}\) is some arbitrary state. For this choice, we then have
\begin{align}
\label{eq:state-disc-converse}
& \frac { p _ { \mathrm { succ } } \left( \left\{ p _ { j } , \sigma _ { j } \right\} , \left\{ M _ { j } \right\} , \mathbb { I } \otimes \Lambda \right) } { \max\limits_ { \Gamma _ { j } \in O _ { \mathcal { F } } } p _ { \mathrm { succ } } \left( \left\{ p _ { j } , \sigma _ { j } \right\} , \left\{ M _ { j } \right\} , \mathbb { I } \otimes \Gamma \right) } \nonumber \\
& = \frac { \operatorname { Tr } \left[ \mathbb { I } \otimes \Lambda ( | \Phi ^ { + } \rangle \langle \Phi ^ { + } | \right) X ] } { \max\limits_ { \Gamma \in O _ { \mathcal { F } } } \operatorname { Tr } \left[ \mathbb { I } \otimes \Gamma \left( \left| \Phi ^ { + } \rangle \langle \Phi ^ { + } \right| \right) X \right] } ,\nonumber \\
&  = \frac{\text{Tr}\left[ \rho_{\Lambda} X \right]}{\max\limits_{\Gamma \in \mathcal{O}_{\mathcal{F}}} \text{Tr}\left[ \rho_{\Gamma} X  \right]} \geq \left( 1+ \mathcal{N}_{\mathcal{R}}(\Lambda) \right),
\end{align}
where the last inequality follows from the dual SDP formulation.

We now have both an upper and a lower bound.  From the first half of the proof, we have that the LHS of Theorem 2 (main text) is $\leq \left( 1+ \mathcal{N}_{\mathcal{R}}(\Lambda) \right)$. Since the  LHS is (already) maximized over all measurements and ensembles, in particular, it is greater than or equal to the value for the particular ensemble and measurements chosen above (since it is the maximum). On the other hand, using the converse part of the proof (as shown by \cref{eq:state-disc-converse}), we have that the LHS is $\geq \left( 1+ \mathcal{N}_{\mathcal{R}}(\Lambda) \right)$. Therefore, combining these two, it must be equal to the RHS, which completes the proof. 
\end{proof}

\section{\texorpdfstring{R\MakeLowercase{o}}{Ro}NM as advantage in channel discrimination}
Suppose we have access to an ensemble of maps, i.e., a quantum operation sampled from the prior distribution, $\mathcal{E}_\Lambda =\left\{ p _ { j } , \Lambda _ { j } \right\} _ { j = 0 } ^ { N - 1 }$. We apply one map chosen at random from this ensemble to one subsystem of a bipartite state $\Psi \in \mathcal{D} \left( \mathcal{H} \otimes \mathcal{H} \right)$, and perform a collective measurement on the output system by measurement operators $\mathbb{M} = \left\{ M _ { j } \right\} _ { j = 0 } ^ { N }$. We also allow measurements with inconclusive outcomes; for example, when using POVMs to distinguish non-orthogonal states. Then, the average success probability for this task is \(\widetilde{p} _ { \mathrm { succ } } \left( \mathcal{E}_\Lambda, \mathbb{M}, \Psi \right)  = \sum _ { j = 0 } ^ { N - 1 } p _ { j } \operatorname { Tr } \left[ \left\{ \mathbb{I} \otimes \Lambda _ { j } ( \Psi ) \right\} M _ { j } \right]\), where inconclusive measurement outcomes do not contribute to the success probability. For the ensemble of maps, we also define $\widetilde{\mathcal{N}}_{\mathcal{R}} \left( \mathcal{E}_\Lambda \right) \equiv \max_{j} \mathcal{N}_{\mathcal{R}} \left( \Lambda_{j} \right)$. Then, we can connect the maximal advantage in this channel discrimination task to the maximum robustness of the ensemble of maps.\\

\begin{theorem}[main text]
For an ensemble of maps, \(\mathcal{E}_\Lambda\) as defined above, we have,
\begin{align*}
\max\limits_{\Psi , \mathbb{M}} \frac{\widetilde{p} _ { \mathrm{ succ } } \left( \mathcal{E}_\Lambda , \mathbb{M}, \Psi \right) }{ \max\limits_ { \Gamma \in O _ { \mathcal { F } } } \widetilde{p} _ { \mathrm{ succ } } \left( \mathcal{E}_\Gamma , \mathbb{M} , \Psi \right) } =  1 + \widetilde {\mathcal{N} } _ { \mathcal{R} } \left( \mathcal{E}_\Lambda \right),
\end{align*}
where $\Psi \in \mathcal{D} \left( \mathcal{H} \otimes \mathcal{H} \right)$ is the set of all bipartite density matrices.
\end{theorem}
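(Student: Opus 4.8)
The plan is to mirror the structure of the state-discrimination proof, but now handle the ensemble index $j$ term by term. I would first prove the upper bound $\max_{\Psi,\mathbb{M}}(\cdots)\le 1+\widetilde{\mathcal{N}}_{\mathcal{R}}(\mathcal{E}_\Lambda)$ for \emph{every} admissible $\Psi$ and $\mathbb{M}$, and then exhibit a specific pair $(\Psi,\mathbb{M})$ that saturates it; throughout I read the denominator $\max_{\Gamma\in O_{\mathcal{F}}}\widetilde{p}_{\mathrm{succ}}(\mathcal{E}_\Gamma,\mathbb{M},\Psi)$ as a maximization over all Markovian ensembles $\mathcal{E}_\Gamma=\{p_j,\Gamma_j\}$ with $\Gamma_j\in\mathcal{O}_{\mathcal{F}}$ and the same priors $p_j$.

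For the upper bound, fix $\Psi\in\mathcal{D}(\mathcal{H}\otimes\mathcal{H})$ and $\mathbb{M}=\{M_j\}_{j=0}^{N}$. For each $j$ invoke the optimal decomposition $\Lambda_j=(1+s_j^\star)\Gamma_j^\star-s_j^\star\Phi_j^\star$ with $s_j^\star=\mathcal{N}_{\mathcal{R}}(\Lambda_j)$ and $\Gamma_j^\star,\Phi_j^\star\in\mathcal{O}_{\mathcal{F}}$. Substituting into $\widetilde{p}_{\mathrm{succ}}$ and using that each $\Phi_j^\star$ is CPTP, so $\mathbb{I}\otimes\Phi_j^\star(\Psi)\ge 0$ and hence $\mathrm{Tr}[\{\mathbb{I}\otimes\Phi_j^\star(\Psi)\}M_j]\ge 0$, the $\Phi_j^\star$ terms may be dropped to give $\widetilde{p}_{\mathrm{succ}}(\mathcal{E}_\Lambda,\mathbb{M},\Psi)\le\sum_j p_j(1+s_j^\star)\,\mathrm{Tr}[\{\mathbb{I}\otimes\Gamma_j^\star(\Psi)\}M_j]$. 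Since every trace in this sum is nonnegative, I may replace $1+s_j^\star$ by $1+\max_j s_j^\star=1+\widetilde{\mathcal{N}}_{\mathcal{R}}(\mathcal{E}_\Lambda)$, and then recognize $\sum_j p_j\,\mathrm{Tr}[\{\mathbb{I}\otimes\Gamma_j^\star(\Psi)\}M_j]=\widetilde{p}_{\mathrm{succ}}(\mathcal{E}_{\Gamma^\star},\mathbb{M},\Psi)\le\max_{\Gamma\in\mathcal{O}_{\mathcal{F}}}\widetilde{p}_{\mathrm{succ}}(\mathcal{E}_\Gamma,\mathbb{M},\Psi)$ because $\{\Gamma_j^\star\}$ is one admissible Markovian ensemble. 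Dividing, the ratio is $\le 1+\widetilde{\mathcal{N}}_{\mathcal{R}}(\mathcal{E}_\Lambda)$ for every $\Psi,\mathbb{M}$, so the maximum satisfies the same bound.

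For the converse I would ``project onto the hardest map.'' Let $j^\star\in\arg\max_j\mathcal{N}_{\mathcal{R}}(\Lambda_j)$, taking without loss of generality $p_{j^\star}>0$ (maps of zero prior weight never enter $\widetilde{p}_{\mathrm{succ}}$, and if $\widetilde{\mathcal{N}}_{\mathcal{R}}(\mathcal{E}_\Lambda)=0$ the claim is trivial). Choose $\Psi=|\Phi^+\rangle\langle\Phi^+|$, let $X$ be an optimal solution of the dual SDP (main text, Eq. (12)) for $\rho_{\Lambda_{j^\star}}$, and set $M_{j^\star}=X/\|X\|_\infty$, $M_j=0$ for $j\neq j^\star,N$, and $M_N=\mathbb{I}-M_{j^\star}$; this is a valid POVM with inconclusive outcome $N$ because $0\le X/\|X\|_\infty\le\mathbb{I}$. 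With this choice all terms except $j^\star$ vanish in both numerator and denominator, and the common factors $p_{j^\star}$ and $\|X\|_\infty$ cancel, so the ratio collapses to $\mathrm{Tr}[\rho_{\Lambda_{j^\star}}X]\big/\max_{\Gamma\in\mathcal{O}_{\mathcal{F}}}\mathrm{Tr}[\rho_\Gamma X]$. By strong duality $\mathrm{Tr}[\rho_{\Lambda_{j^\star}}X]=1+\mathcal{N}_{\mathcal{R}}(\Lambda_{j^\star})$, while dual feasibility $\mathrm{Tr}[\delta X]\le 1$ for all $\delta\in\mathcal{F}$ gives $\max_{\Gamma\in\mathcal{O}_{\mathcal{F}}}\mathrm{Tr}[\rho_\Gamma X]\le 1$ (and strictly positive, e.g.\ against the depolarizing-channel Choi matrix $\mathbb{I}/d^2\in\mathcal{F}$). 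Hence the ratio is $\ge 1+\mathcal{N}_{\mathcal{R}}(\Lambda_{j^\star})=1+\widetilde{\mathcal{N}}_{\mathcal{R}}(\mathcal{E}_\Lambda)$, so the maximum over $\Psi,\mathbb{M}$ is at least this. Combining with the upper bound yields equality.

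The main point requiring care is the step where the $\Phi_j^\star$ contributions are discarded: it relies on $\mathrm{Tr}[\{\mathbb{I}\otimes\Phi_j^\star(\Psi)\}M_j]\ge 0$, which holds precisely because $\Phi_j^\star$ is a genuine quantum channel. This is exactly where the choice of the \emph{generalized} robustness (mixing only with Markovian, hence CP, maps) is essential, and where the ``mix-with-all-maps'' variant would break; I would flag this explicitly, as in the state-discrimination case. A secondary matter is well-definedness of the ratio for the chosen $(\Psi,\mathbb{M})$, namely that the denominator is nonzero, which is handled by the depolarizing-channel remark, and that $p_{j^\star}>0$ may be assumed; the remaining manipulations are routine.
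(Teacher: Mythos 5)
Your proof follows the paper's argument essentially verbatim: the upper bound via the term-wise optimal decompositions $\Lambda_j=(1+s_j^\star)\Gamma_j^\star-s_j^\star\Phi_j^\star$, dropping the nonnegative $\Phi_j^\star$ contributions and replacing $s_j^\star$ by its maximum, and the converse via $\Psi=|\Phi^+\rangle\langle\Phi^+|$ with the POVM built from the dual-SDP witness of $\Lambda_{j^\star}$ (all other $M_j=0$, inconclusive outcome $M_N=\mathbb{I}-M_{j^\star}$), so it is correct and takes the same route. The only slip is in your added well-definedness remark: $\mathbb{I}/d^2$ is the Choi matrix of the completely depolarizing channel and does \emph{not} belong to $\mathcal{F}^{\epsilon,t}$ in the $\epsilon\to 0^+$ limit (the free Choi matrices there are perturbations of $|\Phi^+\rangle\langle\Phi^+|$); strict positivity of the denominator instead follows directly from the optimal decomposition $\rho_{\Lambda_{j^\star}}=(1+s^\star)\delta^\star-s^\star\tau^\star$, since $(1+s^\star)\operatorname{Tr}[\delta^\star X]=\operatorname{Tr}[\rho_{\Lambda_{j^\star}}X]+s^\star\operatorname{Tr}[\tau^\star X]\geq 1$.
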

\begin{proof}
Similar to the proof for the state discrimination game, we have,
$$ \widetilde{p} _ { \mathrm { succ } } \left( \left\{ p _ { j } , \mathbb { I } \otimes \Lambda _ { j } \right\} _ { j = 0 } ^ { N - 1 } , \left\{ M _ { j } \right\} _ { j = 0 } ^ { N } , \Psi \right) $$
$$ = \sum _ { j = 0 } ^ { N - 1 } p _ { j } \operatorname { Tr } \left[ \left\{ \mathbb { I } \otimes \Lambda _ { j } ( \Psi ) \right\} M _ { j } \right] $$
$$ \leq \sum _ { j = 0 } ^ { N - 1 } p _ { j } \left( 1 + r _ { j } \right) \operatorname { Tr } \left[ \left\{ \mathbb { I } \otimes \Gamma _ { j } ( \Psi ) \right\} M _ { j } \right] $$
$$ \leq \left( 1 + r _ { j ^ { \star } } \right) \max _ { \Gamma _ { j } \in O _ { \mathcal { F } } } \widetilde{p} _ { \mathrm{ succ } } \left( \left\{ p _ { j } , \mathbb { I } \otimes \Gamma _ { j } \right\} _ { j = 0 } ^ { N - 1 } , \left\{ M _ { j } \right\} _ { j = 0 } ^ { N } , \Psi \right), $$
where \(r_{j}:= \mathcal{N}_{\mathcal{R}}(\Lambda_{j})\) and \(j^{\star} = \text{argmax}_{j} r_{j}\). Once again, note that the choice of $\Phi \in \mathcal{O}_{\mathcal{F}}$ is crucial in the inequality above.

To prove the converse part of the theorem, we proceed in a similar fashion as before and make a specific choice of ensemble and POVMs. Consider, \(\Psi = |\Phi^{+} \rangle \langle  \Phi^{+} |\), and the POVM, $M_{j^{\star}} = X_{j^{\star}}/ \norm{X_{j}^{\star}}_{\infty}$, where \(X_{j^{\star}}\) is an optimal witness from the dual SDP formulation for \(\Lambda_{j^{\star}}\) (see main text, Eq. (12)). Define, \(M_{j} = 0\) for \(j \neq j^{\star}\) and \(M_{N} = \mathbb{I} - M_{j^{\star}}\). Then, for this specific choice of the state and measurements, we have
\begin{align}
  & \frac { \widetilde{p} _ { \mathrm{ succ } } \left( \left\{ p _ { j } , \mathbb { I } \otimes \Lambda _ { j } \right\} _ { j = 0 } ^ { N - 1 } , \left\{ M _ { j } \right\} _ { j = 0 } ^ { N } , \Psi \right) } { \max \limits_ { \Gamma _ { j } \in O _ { \mathcal { F } } } \widetilde{p} _ { \mathrm{ succ } } \left( \left\{ p _ { j } , \mathbb { I } \otimes \Gamma _ { j } \right\} _ { j = 0 } ^ { N - 1 } , \left\{ M _ { j } \right\} _ { j = 0 } ^ { N } , \Psi \right) }, \nonumber\\
  &= \frac{\sum \limits_ { j = 0 } ^ { N - 1 } p _ { j } \operatorname { Tr } \left[ \left\{ \mathbb { I } \otimes \Lambda _ { j } \left( \left| \Phi ^ { + } \rangle \langle  \Phi ^ { + } \right| \right) \right\} M _ { j } \right]}{ \max \limits_ { \Gamma _ { j } \in O _ { \mathcal { F } } } \sum \limits_ { j = 0 } ^ { N - 1 } p _ { j } \operatorname { Tr } \left[ \left\{ \mathbb { I } \otimes \Gamma _ { j } ( | \Phi ^ { + } \rangle \langle \Phi ^ { + } | \right) \} M _ { j } \right]}, \nonumber\\
&= \frac { p _ { j ^ { \star } } \operatorname { Tr } \left[ \rho _ { \Lambda _ { j } \star } X _ { j ^ { \star } } \right] } { \max \limits_ { \Gamma \in O _ { \mathcal { F } } } p _ { j ^ { \star } } \operatorname { Tr } \left[ \rho _ { \Gamma } X _ { j ^ { \star } } \right] }, \nonumber\\
& \leq 1 + \widetilde{\mathcal{N}}_{\mathcal{R}}( \left\{p_{j}, \Lambda_{j} \right\}),
\end{align}
where the last inequality is due to the dual SDP formulation. Then, by combining the converse part of the proof with the first half, the proof is complete by a similar argument as in the previous theorem.
\end{proof}

\section{Connection to single-shot information theory}

The single-shot variant of the accessible information for a channel is defined as
\begin{align}
I_{\min }^{\mathrm{acc}}(\Lambda) \equiv \max_{\mathcal{E}, \mathbb{M}}~I_{\min}(X;Y),
\end{align}
where \(I_{\text{min}}(X;Y) \equiv H_{\text{min}}(X) - H_{\text{min}}(X|Y)\), and \(H_{\text{min}}(X) = -\log\left( \max_x p(x) \right)\), \(H_{\text{min}}(X|Y) = -\log( \sum\limits_{y} \max_{x} p(x,y))\) are the min-entropy and min-conditional entropy, respectively (see also Ref.~\cite{skrzypczykRobustnessMeasurementDiscrimination2018}).\\

\begin{theorem}[main text]
For any map $\Lambda$, Markovian or non-Markovian, we have,
\begin{align*}
I^{\mathrm{acc}}_{\mathrm{min}}(\Lambda) - \max_{ \Gamma \in \mathcal{O}_{\mathcal{F}}} I^{\mathrm{acc}}_{\mathrm{min}}(\Gamma) \leq \log\left( 1 + \mathcal{N}_{\mathcal{R}}(\Lambda) \right).
\end{align*}
\end{theorem}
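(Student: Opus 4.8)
The plan is to run the same pseudo-mixture argument used for the upper-bound half of the state discrimination theorem, now applied at the level of the joint input--output distribution. First I would rewrite the single-shot mutual information as a single logarithm: since $H_{\min}(X) = -\log(\max_x p(x))$ and $H_{\min}(X|Y) = -\log(\sum_y \max_x p(x,y))$,
$$I_{\min}(X;Y) = \log\!\left( \frac{\sum_y \max_x p(x,y)}{\max_x p(x)} \right).$$
For an input ensemble $\mathcal{E}=\{p(x),\omega_x\}$ and POVM $\mathbb{M}=\{M_y\}$, the joint distribution through the map is $p(x,y) = p(x)\operatorname{Tr}[(\mathbb{I}\otimes\Lambda)(\omega_x)M_y]$; because $\Lambda$ is trace preserving and $\sum_y M_y = \mathbb{I}$, the marginal $\sum_y p(x,y) = p(x)$ is \emph{independent of the map}, so the denominator $\max_x p(x)$ is the same for $\Lambda$ and for any replacement $\Gamma \in \mathcal{O}_{\mathcal{F}}$. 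This is what lets the robustness factor come out cleanly.

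Next I would plug in the optimal decomposition $\Lambda = (1+s^{\star})\Gamma^{\star} - s^{\star}\Phi^{\star}$ with $s^{\star} = \mathcal{N}_{\mathcal{R}}(\Lambda)$ and $\Gamma^{\star},\Phi^{\star} \in \mathcal{O}_{\mathcal{F}}$. Linearity of the map and of the trace give $p(x,y) = (1+s^{\star})\,p_{\Gamma^{\star}}(x,y) - s^{\star}\,p_{\Phi^{\star}}(x,y)$, where $p_{\Gamma^{\star}}(x,y) := p(x)\operatorname{Tr}[(\mathbb{I}\otimes\Gamma^{\star})(\omega_x)M_y]$ and likewise for $\Phi^{\star}$. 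The crucial observation --- and the precise place where the choice of \emph{generalized} robustness (mixing only with Markovian, hence CPTP, maps) is essential --- is that $p_{\Phi^{\star}}(x,y) \geq 0$, since $\mathbb{I}\otimes\Phi^{\star}$ sends states to states and $M_y \geq 0$. Hence $p(x,y) \leq (1+s^{\star})\,p_{\Gamma^{\star}}(x,y)$ entrywise, so $\max_x p(x,y) \leq (1+s^{\star})\max_x p_{\Gamma^{\star}}(x,y)$ and, summing over $y$, $\sum_y \max_x p(x,y) \leq (1+s^{\star})\sum_y \max_x p_{\Gamma^{\star}}(x,y)$.

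Combining the two observations, for any fixed $\mathcal{E},\mathbb{M}$,
$$I_{\min}(X;Y)_\Lambda \leq \log(1+s^{\star}) + \log\!\left( \frac{\sum_y \max_x p_{\Gamma^{\star}}(x,y)}{\max_x p(x)} \right) = \log\!\left(1+\mathcal{N}_{\mathcal{R}}(\Lambda)\right) + I_{\min}(X;Y)_{\Gamma^{\star}},$$
and $I_{\min}(X;Y)_{\Gamma^{\star}} \leq I^{\mathrm{acc}}_{\min}(\Gamma^{\star}) \leq \max_{\Gamma\in\mathcal{O}_{\mathcal{F}}} I^{\mathrm{acc}}_{\min}(\Gamma)$ (the maximum is attained since $\mathcal{O}_{\mathcal{F}}$ is compact). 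The right-hand side no longer depends on $\mathcal{E},\mathbb{M}$, so maximizing the left-hand side over all ensembles and POVMs gives $I^{\mathrm{acc}}_{\min}(\Lambda) - \max_{\Gamma\in\mathcal{O}_{\mathcal{F}}} I^{\mathrm{acc}}_{\min}(\Gamma) \leq \log(1+\mathcal{N}_{\mathcal{R}}(\Lambda))$. I do not anticipate a genuine obstacle: structurally this is the upper-bound half of the state discrimination theorem transported to min-entropies. The only point requiring real care is the nonnegativity of $p_{\Phi^{\star}}(x,y)$, which hinges on $\Phi^{\star}$ being completely positive and trace preserving; for the more general robustness defined via mixing with arbitrary (possibly non-CP) maps this step breaks, consistent with the paper's remark that the information-theoretic relations do not carry over in that case.
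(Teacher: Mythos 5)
Your proposal is correct and follows essentially the same route as the paper's proof: write $I_{\min}$ as a single logarithm, insert the optimal pseudo-mixture $\Lambda=(1+s^{\star})\Gamma^{\star}-s^{\star}\Phi^{\star}$, drop the nonnegative $\Phi^{\star}$ term (valid precisely because $\Phi^{\star}$ is a CPTP Markovian map), split off $\log(1+s^{\star})$, and then maximize over $\Gamma\in\mathcal{O}_{\mathcal{F}}$ and over $\mathcal{E},\mathbb{M}$. If anything, your entrywise bound $p(x,y)\leq(1+s^{\star})p_{\Gamma^{\star}}(x,y)$ before taking $\max_x$ handles the maximum inside the sum a bit more carefully than the paper's displayed manipulation, but the argument is the same in substance.
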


\begin{proof}
Consider a non-Markovian map, $\Lambda$, with the optimal decomposition, $\Lambda= (1+s)   \Gamma - s   \Phi, \text{ where } s = \mathcal{N}_{\mathcal{R}}(\Lambda)$, then,
\begin{align*}
&I^{\text{acc}}_{\text{min}} (   \Lambda) = \max_{ \{ \sigma_{j}, p_{j}, N_{y} \}} I_{\text{min}}(X;Y;   \Lambda) \\
&= \max_{ \{ \sigma_{j}, p_{j}, N_{y} \}} \log\left( \frac{\sum\limits_{y} \max_{j} p(j) \text{Tr}\left[   \Lambda(\sigma_{j}) N_{y} \right] }{p_{\text{max}}} \right)
\end{align*}
Then, plugging in the optimal decomposition, we have
\begin{widetext}
\begin{align*}
&= \max_{ \{ \sigma_{j}, p_{j}, N_{y} \}} \log\left( \frac{\sum\limits_{y} \max_{j} p(j) \text{Tr}\left[ \left( (1+s)    \Gamma(\sigma_{j}) - s    \Phi(\sigma_{j}) \right) N_{y} \right] }{p_{\text{max}}} \right) \\
&= \max_{ \{ \sigma_{j}, p_{j}, N_{y} \}} \log\left( (1+s) \frac{\sum\limits_{y} \max_{j} p(j) \text{Tr}\left[   \Gamma(\sigma_{j}) N_{y} \right] }{p_{\text{max}}} - \underbrace{ s \frac{\sum\limits_{y} \max_{j} p(j) \text{Tr}\left[   \Phi(\sigma_{j}) N_{y} \right] }{p_{\text{max}}}}_{\geq 0} \right). 
\end{align*}
\end{widetext}
Since, $\log(\cdot)$ is monotonic, we have, $\log\left( a-b \right) \leq \log\left( a \right), \text{ if } b \geq 0$. Therefore,
$$ \leq \max_{ \{ \sigma_{j}, p_{j}, N_{y} \}} \log\left( (1+s) \frac{\sum\limits_{y} \max_{j} p(j) \text{Tr}\left[   \Gamma(\sigma_{j}) N_{y} \right] }{p_{\text{max}}} \right) $$
Now we can maximize over all Markovian maps $\Gamma$, to get
$$ \leq  \max_{ \Gamma \in \mathcal{O}_{\mathcal{F}}} \max_{ \{ \sigma_{j}, p_{j}, N_{y} \}} \log\left( (1+s) \frac{\sum\limits_{y} \max_{j} p(j) \text{Tr}\left[   \Gamma(\sigma_{j}) N_{y} \right] }{p_{\text{max}}} \right) $$
By using $\log\left( a \times b \right) = \log\left( a \right) + \log\left( b \right)$, we have
\begin{align*}
&\leq \max_{ \Gamma \in \mathcal{O}_{\mathcal{F}}} \max_{ \{ \sigma_{j}, p_{j}, N_{y} \}} \log\left( \frac{\sum\limits_{y} \max_{j} p(j) \text{Tr}\left[   \Gamma(\sigma_{j}) N_{y} \right] }{p_{\text{max}}} \right) \\
&+ \log\left( 1 + \mathcal{N}_{\mathcal{R}}(\Lambda) \right).
\end{align*}
Therefore, 
$$ I^{\text{acc}}_{\text{min}}(   \Lambda) - \max_{ \Gamma \in \mathcal{O}_{\mathcal{F}}} I^{\text{acc}}_{\text{min}}(  \Gamma) \leq \log\left( 1 + \mathcal{N}_{\mathcal{R}}(\Lambda) \right) . $$
\end{proof}

\section{Relating the \texorpdfstring{R\MakeLowercase{o}}{Ro}NM to the RHP measure}

We now relate the RoNM to the RHP measure. This relation, in turn, provides an operational meaning to the RHP measure.  Consider the optimal decomposition for a Choi matrix:
$$ \rho = \left( 1+ s^{\star} \right) \delta - s^{\star} \tau, \text{ where } s^{\star} = \mathcal{N}_{\mathcal{R}}(\rho) \text{ and } \tau, \delta \geq 0. $$
The spectral decomposition of \(\rho\) can be written
$$ \rho = \Delta^{+} - \Delta^{-}, \text{ where } \Delta^{\pm} \geq 0. $$
Comparing the two decompositions, we see that \(\Delta^{-} = s^{\star} \tau \implies \norm{\Delta^{-}}_1 = s^{\star} = \mathcal{N}_{\mathcal{R}}(\rho)\), since \( \norm{\tau}_1 = 1 \). Also,
\begin{align}
\label{eq:1-norm-eqn-1}
\norm{\rho}_{1} = \norm{\Delta^{+}}_1 + \norm{\Delta^{-}}_{1} = \text{Tr}\left[ \Delta^{+} \right] + \text{Tr}\left[ \Delta^{-} \right].
\end{align}
Since the map \(\Lambda\) corresponding to the Choi matrix \(\rho\) is trace-preserving, we have
\begin{align}
\label{eq:1-norm-eqn-2}
\text{Tr}\left[ \rho \right] = \text{Tr}\left[ \Delta^{+} \right] - \text{Tr}\left[ \Delta^{-} \right] = 1 .
\end{align}

Subtracting \cref{eq:1-norm-eqn-2} from \cref{eq:1-norm-eqn-1}, we have
\begin{align}
\label{eq:closed-form-robustness-choi}
\norm{\rho}_{1} - 1 &= 2 \text{Tr}\left[ \Delta^{-} \right] = 2 \norm{\Delta^{-}}_1 = 2 \mathcal{N}_{\mathcal{R}}(\rho) \nonumber\\
\implies \mathcal{N}_{\mathcal{R}}(\rho) &= \frac{\norm{\rho}_{1} - 1}{2}.
\end{align}

Now, recall that the RHP measure for some evolution \(\Lambda\) is defined as (see page 3 of Ref.~\cite{rivas_entanglement_2010})
\begin{align}
\mathcal{I}=\int_{0}^{\infty} dt \lim _{\epsilon \rightarrow 0^{+}} \frac{\left(f_{\mathrm{NCP}}(t+\epsilon, t)-1 \right) }{\epsilon},
\end{align}
where $f_{\mathrm{NCP}}(t+\epsilon, t):= \| (\Lambda _ { ( t + \epsilon , t ) } \otimes \mathbb { I } ) ( | \Phi \rangle \langle \Phi | ) \| _ { 1 } = \norm{\rho_\Lambda}_1$.

Before comparing the RHP and RoNM, we note that, by construction, $\rho$ is a function of both $t$ and $\epsilon$. Therefore, we'll need to remove the $\epsilon$ dependence and integrate over all time; where the only contribution comes from the times at which the map (and hence the measure) is non-Markovian. That is, define
$$\mathcal{N}^{\text{total}}_{\mathcal{R}}(\rho) \equiv \int\limits_{0}^{\infty} dt \lim_{\epsilon \rightarrow 0^+} \left( \frac{\mathcal{N}_{\mathcal{R}}(\rho)}{\epsilon} \right).$$
Comparing to the RHP measure, we have
\begin{align}
\mathcal{N}^{\text{total}}_{\mathcal{R}}(\rho) &= \int\limits_{0}^{\infty} dt \lim_{\epsilon \rightarrow 0^+} \left( \frac{\mathcal{N}_{\mathcal{R}}(\rho)}{\epsilon} \right) \nonumber\\
&= \frac{1}{2} \int\limits_{0}^{\infty} dt \lim_{\epsilon \rightarrow 0^+} \left( \frac{\norm{\rho}_{1} - 1}{\epsilon} \right) \nonumber\\ 
&= \frac{1}{2} \int_{0}^{\infty} dt \lim _{\epsilon \rightarrow 0^{+}} \frac{\left(f_{\mathrm{NCP}}(t+\epsilon, t)-1 \right) }{\epsilon} \nonumber\\
&= \frac{\mathcal{I}}{2}.
\end{align}
Therefore, the RoNM and the RHP measure are equal upto a factor of one-half.

Moreover, one can also define a normalized version of the measure (in analogy to the normalized measure of RHP), 
\begin{align}
\mathcal{N}^{\text{norm}}_{\mathcal{R}} :=\mathcal{N}^{\text{total}}_{\mathcal{R}}/\left(1+ \mathcal{N}^{\text{total}}_{\mathcal{R}}\right),
\end{align}
such that, $\mathcal{N}^{\text{norm}}_{\mathcal{R}} = 0$ for $\mathcal{N}^{\text{total}}_{\mathcal{R}} = 0$ (Markovian evolution) and $\mathcal{N}^{\text{norm}}_{\mathcal{R}} \rightarrow 1$ for $\mathcal{N}^{\text{total}}_{\mathcal{R}} \rightarrow \infty$.

\section{Analytical form for single qubit-channels}
\label{sec:analytical-results}

We now consider a single-qubit dephasing channel as an analytical example. The Lindblad equation for this channel has the form:
\begin{align}
\frac{d \rho}{d t} = \gamma(t) \left( \sigma_z \rho \sigma_z - \rho \right),
\end{align}
where \(\gamma(t)\) is the rate of dephasing. In the small time approximation, the eigenvalues of the Choi matrix are \(\{ 0,0, \epsilon \gamma(t), 1- \epsilon \gamma(t) \}\), respectively. For Markovian operations, \(\gamma(t) \geq 0\), but for non-Markovian operations \(\gamma(t)\) can be negative. Note that for small \(\epsilon\), if the rate is negative then all the eigenvalues of \(\rho\) are positive except for \(\epsilon \gamma(t)\). Then, we have~\cite{rivas_entanglement_2010}
\begin{align}
g ( t ) = \left\{ \begin{array} { l l } { 0 } & { \gamma ( t ) \geq 0, } \\ { - 2 \gamma ( t ) } & { \gamma ( t ) < 0, } \end{array} \right.
\end{align}
and
\begin{align}
\lim_{\epsilon \rightarrow 0^+} \left( \frac{\mathcal{N}_{\mathcal{R}}(\rho)}{\epsilon} \right) = \left\{ \begin{array} { l l } { 0 } & { \gamma ( t ) \geq 0, } \\ { - \gamma ( t ) } & {  \gamma ( t ) < 0. } \end{array} \right. .
\end{align}
As a result, $\mathcal { N } _ { \mathcal { R } } ^ { \text { total } } = \mathcal { I } / 2$, for the dephasing channel, where $\mathcal{I}$ is the RHP measure.

\end{document}